\documentclass{article}

\usepackage{arxiv}

\usepackage[utf8]{inputenc} 
\usepackage[T1]{fontenc}    
\usepackage{hyperref}       
\usepackage{url}            
\usepackage{booktabs}       
\usepackage{amsfonts}       
\usepackage{nicefrac}       
\usepackage{microtype}      
\usepackage{lipsum}		
\usepackage{graphicx}
\usepackage{doi}
\usepackage{amsthm}
\theoremstyle{plain}
\newtheorem{theorem}{Theorem}

\newtheorem{corollary}{Corollary}
\newtheorem{proposition}[theorem]{Proposition}
\newtheorem{assumption}{Assumption}
\newtheorem{fact}{Fact}
\newtheorem{question}{Question}

\theoremstyle{definition}
\newtheorem{definition}{Definition}

\theoremstyle{remark}
\newtheorem{remark}{Remark}

\title{M\" obius group actions in the solvable chimera model}


\author{Vladimir Ja\' cimovi\' c \\
	Faculty of Natural Sciences and Mathematics\\
	University of Montenegro\\ 
	Cetinjski put bb., 81000 Podgorica\\ 
	Montenegro\\
	\texttt{vladimirj@ucg.ac.me} \\
	\And
	Aladin Crnki\' c \\
	Faculty of Technical Engineering\\
	University of Biha\' c\\
	Irfana Ljubijanki\' ca bb., 77000 Biha\' c\\
	Bosnia and Herzegovina\\
	\texttt{aladin.crnkic@unbi.ba} \\
}



\hypersetup{
pdftitle={M\" obius group actions in the solvable chimera model},
pdfsubject={-},
pdfauthor={Vladimir Ja\' cimovi\' c, Aladin Crnki\' c},
pdfkeywords={Kuramoto model, chimera state, M\" obius transformation, Poisson kernel},
}

\begin{document}
\maketitle

\begin{abstract}
We study actions of M\" obius group on two sub-populations in the solvable chimera model proposed by Abrams et al. Dynamics of global variables are given by two coupled Watanabe-Strogatz systems, one for each sub-population.

At the first glance, asymptotic dynamics in the model seem to be very simple. For instance, in the stable chimera state distributions of oscillators perform a simple rotations after a certain (sufficiently large) moment. However, a closer look unveils that dynamics are subtler that what can be observed from evolution of densities of oscillators' phases. In order to gain the full picture, one needs to investigate dynamics on the transformation group that acts on these densities. Such an approach emphasizes impact of the "hidden" variable that is not visible on macroscopic level.
\end{abstract}

\keywords{Kuramoto model\and chimera state\and M\" obius transformation\and Poisson kernel}

\section{Introduction}\label{sec:1}
Populations of coupled oscillators display complicated spatiotemporal patterns even in the simple setup with identical, symmetrically coupled oscillators. For instance, such a population can spontaneously split into synchronized and desynchronized sub-populations. The first model that exhibits such kind of a state has been reported in 2002. by Kuramoto and Battogtokh \cite{KB}. Existence of such states in homogeneous populations came as a surprise and posed certain mathematical challenges for the research community in the field \cite{Omelchenko,PA}. This puzzling phenomenon has later been named {\it chimera state} or, simply, {\it chimera} \cite{AS}.
Following the pioneering model of Kuramoto and Battogtokh, existence of chimeras has been numerically confirmed in several other models \cite{BPR,Laing,MLS}.

It took nearly a decade before chimera states have been observed in some real-life experiments, including reactions with chemical oscillators \cite{TNS} and mechanical devices (metronomes) \cite{MTFH}. More recently, chimera states have also been reported in arrays of nano-oscillators (spintronics) \cite{ZP} and in social-type networks \cite{Pikovsky}.

The simplest, analytically tractable model that admits chimera state has been introduced in 2008. by Abrams et al. \cite{AMSW}. This model describes two sub-populations in which each oscillator is coupled to all the others, but the coupling within each sub-population is stronger than the coupling between different sub-populations. In addition, the coupling function includes a phase shift. If we denote the two sub-populations by $A$ and $B$ the governing equations in the model are \cite{AMSW}
\begin{equation}
\label{chimera}
\frac{d \varphi_j^l}{dt} = \omega + \sum \limits_{k=A,B} \frac{K_{kl}}{N_k} \sum \limits_{i=1}^{N_k} \sin(\varphi_i^k - \varphi_j^l - \beta), \quad l=A,B, \quad j=\overline{1,N_l}.
\end{equation}
Here, $N_l, \, l=A,B$ are numbers of oscillators in sub-populations $A$ and $B$. Oscillators have equal intrinsic frequencies $\omega$ and there is a global (i.e. common for each pair of oscillators) phase shift $\beta$ in the coupling function. The coupling strength between oscillators that belong to the same sub-population equals $K_{AA} = K_{BB} = \eta > 0$, while the coupling between different sub-populations is $K_{AB} = K_{BA} = \nu > 0$. It is assumed that the coupling within each sub-population is stronger i.e. $\eta > \nu$.

Numerical simulations suggest that there exists a set of values of parameters $\eta, \nu$ and $\beta$ for which (\ref{chimera}) is multistable. In some simulations the system achieves completely synchronous state, with all phases $\varphi_i^A, \varphi_j^B$ equal. In the second scenario one sub-population achieves coherence, while the other remains only partially synchronized. This second scenario is particularly interesting, because it happens for symmetric initial conditions and completely symmetric coupling for the two sub-populations. Hence, it is natural to wonder which factors determine if the system will synchronize, and (if the chimera scenario is realized) which sub-population will achieve coherence. Furthermore, it has been observed that, depending on the parameter values, desynchronized sub-population can exhibit different asymptotic behaviors. For some values of $\eta, \nu$ and $\beta$, the system admits so-called {\it stable chimera state}, in which the order parameter of desynchronized sub-population achieves a certain equilibrium value between zero and one. If the difference $A = \eta - \nu$ is increased above a certain threshold, the chimera state loses stability and the order parameter of desynchronized sub-population starts to oscillate. This corresponds to less trivial collective dynamics within desynchronized sub-population. This second type of asymptotic behavior has been named {\it breathing chimera} in \cite{AMSW}. If we further increase the difference $A = \eta - \nu$, the multistability is lost, chimera state disappears, and the system always converges towards completely synchronous state.

Most important, Abrams et al. have shown that numerical simulations are not necessary for investigation of asymptotic behaviors in (\ref{chimera}). The model is {\it solvable} under suitable assumptions. If numbers of oscillators $N_A$ and $N_B$ are large and initial distributions of oscillators in both sub-populations are uniform, the system can be studied analytically. Order parameters $r_A$ and $r_B$ of the two sub-populations satisfy a certain system of ODE's. Then the bifurcation analysis of this system of ODE's unveils existence of three stable equilibrium states that correspond to the complete synchronization or to chimeras.

The underlying fact behind this analytic approach is that, due to the result of Marvel et al. \cite{MMS}, oscillators in both sub-populations evolve by actions of M\" obius group. In mathematical terms, (\ref{chimera}) induces a one-parametric family of biholomorphic mappings of the bi-disc. In other words, the system evolves by two mutually coupled actions of the M\" obius group on two unit discs.

The idea of studying (\ref{chimera}) through actions of biholomorphic mappings is not new. For instance, Pikovsky and Rosenblum have studied multi-population systems based on the Watanabe-Strogatz reductions in \cite{PR}. In \cite{EM1} Engelbrecht and Mirollo have used the M\" obius group approach to investigate possible equilibrium states for the Kuramoto model with identical oscillators and global coupling. In the concluding remarks of their paper authors suggest that the same approach is also suitable for study of system (\ref{chimera}).

Our approach in the present paper is based on this observation. We will be concerned with the question of what is happening once the chimera state is achieved. Since asymptotic behavior of the synchronized sub-population is clear, we will focus on desynchronized sub-population. Our analysis unveils that asymptotic dynamics in the chimera state are deceptively simple and that some unexpected and, in the certain sense, invisible effects are present.

The main intrigue of the present paper will be explained more precisely at the end of the next Section. Before that we briefly recall some previous results that constitute the basis of our investigation.

\section{Low dimensional dynamics on orbits of the M\" obius group}\label{sec:2}
In their seminal paper \cite{WS}, Watanabe and Strogatz have shown that the Kuramoto model with identical, globally coupled oscillators admits many constants of motion. The dynamics in this model can be reduced to the system of only three ODE's regardless of the total number of oscillators. It took almost 15 years until underlying symmetries have been explained from geometric and group-theoretic point of view in \cite{MMS}. Marvel et al. have shown that identical, globally coupled oscillators evolve by actions of the M\" obius group. The 3-dimensional system of ODE's, derived by Watanabe and Strogatz, defines dynamics on the M\" obius group (i.e. on the group of conformal mappings that preserve the unit disc).

In order to explain this, start with the group of M\" obius transformations in the complex plane. Denote by $G$ the set of all transformations that leave the unit disc invariant. $G$ is subgroup of the larger group of all M\" obius transformations of the complex plane. This (sub)group is referred to as {\it M\" obius group} in the title and throughout the present paper. The general transformation from $G$ can be written in the following form
\begin{equation}
\label{Mobius}
g(z) = \frac{e^{i \psi} z + \alpha}{1 + \bar \alpha e^{i \psi} z}.
\end{equation}
Parameters of this transformation are the angle $\psi \in [0,2 \pi]$ and the complex number $\alpha \in {\mathbb C}$, $|\alpha| < 1$. Hence, (real) dimension of group $G$ equals three.

By taking substitution $z_j^l = e^{i \varphi_j^l}$ for $j=1,\dots,N_l$, $l=A,B$, we represent phase oscillators by points on the unit circle. All oscillators in (\ref{chimera}) are identical and oscillators belonging to the same sub-population are coupled to all other oscillators in the identical way. In other words, each sub-population is subject to its own mean field and these two mean fields induce two actions of $G$. Then the simple adaption of the result of Marvel et al. \cite{MMS} yields the following

\begin{proposition}
\label{prop:1}
Consider a population of oscillators governed by eqs. (\ref{chimera}).
There exist two one-parametric families $g_t^A$ and $g_t^B$ of M\" obius transformations from $G$, such that
$$
z_j^A(t) = g_t^A(z_j^A(0)), \quad j=1,\dots,N_A; \\
$$
$$
z_j^B(t) = g_t^B(z_j^B(0)), \quad j=1,\dots,N_B.
$$
\end{proposition}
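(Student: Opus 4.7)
The plan is to reduce the dynamics within each sub-population to the canonical mean-field form handled by Marvel, Mirollo and Strogatz in \cite{MMS} and then apply their theorem directly. Although the two sub-populations are coupled to each other, within a single sub-population every oscillator feels exactly the same common forcing, and this is the only structural feature needed for the M\"obius transformation property to emerge.

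First I would pass to complex coordinates $z_j^l = e^{i\varphi_j^l}$ and rewrite the double sum in (\ref{chimera}) using $\sin x = \mathrm{Im}\, e^{ix}$. Introducing the order parameters
$$
R_k(t) = \frac{1}{N_k}\sum_{i=1}^{N_k} e^{i\varphi_i^k(t)}, \qquad k=A,B,
$$
and the two combined mean fields
$$
H_l(t) = e^{-i\beta}\bigl(K_{Al}R_A(t) + K_{Bl}R_B(t)\bigr), \qquad l=A,B,
$$
system (\ref{chimera}) collapses, within each sub-population, to the common driven equation
$$
\dot\varphi_j^l = \omega + \mathrm{Im}\bigl(H_l(t)\, e^{-i\varphi_j^l}\bigr), \qquad j=1,\dots,N_l,
$$
in which $H_l(t)$ is independent of $j$. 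In the complex variable this becomes the Riccati-type equation
$$
\dot z_j^l = i\omega\, z_j^l + \frac{1}{2}H_l(t) - \frac{1}{2}\overline{H_l(t)}\,(z_j^l)^2,
$$
which is exactly the canonical form analyzed in \cite{MMS}.

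Second, I would invoke the Marvel-Mirollo-Strogatz theorem: for any continuous complex function $H_l(t)$, the flow of such a Riccati equation is a one-parameter family $g_t^l$ of M\"obius transformations of the form (\ref{Mobius}), preserving the closed unit disc. Applied to the two mean fields $H_A(t)$ and $H_B(t)$ determined self-consistently along any trajectory of (\ref{chimera}), this yields the two families $g_t^A$ and $g_t^B$ asserted in the proposition. The one point that needs a line of justification, and arguably the main obstacle, is that the inter-population coupling does not spoil the construction; but this is immediate from the decomposition above, since the cross terms $K_{BA}R_B$ and $K_{AB}R_A$ enter $H_A$ and $H_B$ as purely time-dependent forcings, identical for every $j$ within the corresponding sub-population. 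Hence each sub-population evolves as a single MMS system driven by its own externally imposed mean field, producing two coupled M\"obius actions rather than one.
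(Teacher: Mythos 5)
Your proposal is correct and follows essentially the same route as the paper, which likewise observes that every oscillator within a given sub-population feels one and the same mean field (the cross-population term being just another $j$-independent forcing) and then invokes the theorem of Marvel, Mirollo and Strogatz \cite{MMS}. The only difference is that you write out the reduction to the common Riccati form $\dot z_j^l = i\omega z_j^l + \tfrac{1}{2}H_l(t) - \tfrac{1}{2}\overline{H_l(t)}(z_j^l)^2$ explicitly, whereas the paper states the adaptation of \cite{MMS} without detail.
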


Furthermore, parameters of the families $g_t^A$ and $g_t^B$ satisfy the following system of ODE's
\begin{equation}
\label{WS^2}
\left \{
\begin{array}{llll}
\dot \alpha_A = i (f_A \alpha_A^2 + \omega \alpha_A + \bar f_A); \\
\dot \psi_A = f_A \alpha_A + \omega + \bar f_A \bar \alpha_A; \\
\dot \alpha_B = i (f_B \alpha_B^2 + \omega \alpha_B + \bar f_B); \\
\dot \psi_B = f_B \alpha_B + \omega + \bar f_B \bar \alpha_B
\end{array}
\right.
\end{equation}
with coupling functions
$$
f_A = \frac{i \eta}{2 N_A} \sum \limits_{j=1}^{N_A} e^{-i(\varphi_j^A - \beta)} + \frac{i \nu}{2 N_B} \sum \limits_{j=1}^{N_B} e^{-i(\varphi_j^B - \beta)},
$$
$$
f_B = \frac{i \eta}{2 N_B} \sum \limits_{j=1}^{N_B} e^{-i(\varphi_j^B - \beta)} + \frac{i \nu}{2 N_A} \sum \limits_{j=1}^{N_A} e^{-i(\varphi_j^A - \beta)}.
$$
If we denote by $c_A$ and $c_B$ centroids of sub-populations $A$ and $B$ respectively and assume that $N_A = N_B$, the expressions for coupling functions can be simplified as
$$
f_A = \frac{i}{2} e^{i \beta} (\eta \bar c_A + \nu \bar c_B) \mbox{  and  } f_B = \frac{i}{2} e^{i \beta} (\eta \bar c_B + \nu \bar c_A),
$$
where the notion $\bar w$ stands for the complex conjugation of a complex number $w$.

Underline that (\ref{WS^2}) is the dynamical system on the manifold (Lie group) $G \times G$ of (real) dimension 6, where equations are mutually coupled through complex-valued functions $f_A$ and $f_B$. This means that (\ref{chimera}) generates a trajectory on $G \times G$. This trajectory is determined by initial positions $z_j^A(0)$ and $z_j^B(0)$ of oscillators. In its turn, this evolution on $G \times G$ determines one-parametric families of transformations $g^A_t$, $g_t^B$ whose actions govern evolution of oscillators.

Furthermore, the system evolves on orbits of group $G \times G$ and, since this group is 6-dimensional, the evolution takes place on a 6-dimensional invariant submanifolds (product of two 3-dimensional invariant submanifolds, one for each sub-population). This 6-dimensional submanifold is determined by the initial state of the system.

\subsection{Evolution on the Poisson manifold}
The most interesting is the special case when, due to rotational symmetry of the initial state, the system evolves on 4-dimensional invariant submanifold (that is - 2-dimensional submanifold for each sub-population). To that end we introduce the following

\begin{assumption}
\label{assum:1}
Consider the system (\ref{chimera}) in thermodynamic limit $N_A, N_B \to \infty$ and assume that the initial distribution of oscillators in both sub-populations is uniform on the unit circle $S^1$.
\end{assumption}

In other word, we assume that the initial distribution of oscillators in both sub-populations is given by a density function $\rho(\varphi) = \frac{1}{2 \pi}, \varphi \in [0,2 \pi]$.

\begin{remark}
Mathematically rigorous way to describe this thermodynamic limit would be to pass from (\ref{chimera}) to the system of two first order integro-partial differential equations that govern evolution of the oscillator's densities on the unit circle. This has been done in previous papers, for instance \cite{MS}. The relation between N-finite Kuramoto model and its thermodynamic limit poses some difficulties and subtle questions that have been discussed in \cite{Strogatz}. We will go around these difficulties, as it would require longer exposition.
\end{remark}

\begin{proposition}
\label{prop:2}
Under Assumption \ref{assum:1} distributions of oscillators in sub-populations $A$ and $B$ at each moment $t$ are given by the following density functions
\begin{equation}
\label{Poisson}
\rho_l(t,\varphi) = \frac{1}{2 \pi} \frac{1-r_l^2(t)}{1 - 2 r_l(t) \cos(\varphi - \psi_l(t)) + r_l^2(t)} \mbox{  for } l=A,B.
\end{equation}
Here $0 \leq r_l < 1$ and $0 \leq \psi_l < 2 \pi$.
\end{proposition}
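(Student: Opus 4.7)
My plan is to push the initial uniform density through the M\"obius family $g_t^l$ guaranteed by Proposition \ref{prop:1}, and to recognize the resulting boundary Jacobian as the Poisson kernel. Assumption \ref{assum:1} fixes $\rho_0(\theta) = 1/(2\pi)$ on $S^1$, and Proposition \ref{prop:1}---read in the thermodynamic limit, where the same one-parametric families $g_t^A, g_t^B \in G$ act on continuous distributions exactly as they do on finite sets of oscillators---yields $\rho_l(t,\cdot) = (g_t^l)_*\rho_0$.

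The heart of the argument is the computation of this pushforward. I decompose a general element (\ref{Mobius}) of $G$ as a rotation followed by the disc automorphism $h_{\alpha_l}(w) = (w + \alpha_l)/(1 + \bar\alpha_l w)$. Since rotations preserve uniform density, only $h_{\alpha_l}$ contributes. Its inverse is $u \mapsto (u - \alpha_l)/(1 - \bar\alpha_l u)$, so the change-of-variables formula on the circle gives
\[
\rho_l(t,\varphi) = \frac{1}{2\pi}\,\bigl|(h_{\alpha_l}^{-1})'(e^{i\varphi})\bigr| = \frac{1}{2\pi}\cdot\frac{1-|\alpha_l(t)|^2}{|1-\bar\alpha_l(t)\,e^{i\varphi}|^2}.
\]
Setting $r_l(t) = |\alpha_l(t)|$ and $\psi_l(t) = \arg\alpha_l(t)$ and expanding $|1 - \bar\alpha_l e^{i\varphi}|^2 = 1 - 2 r_l \cos(\varphi-\psi_l) + r_l^2$ reproduces (\ref{Poisson}) verbatim.

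I do not expect any real obstacle: the identity ``pushforward of uniform measure under a disc automorphism is the Poisson kernel'' is classical, being essentially the statement that the harmonic measure from $\alpha_l \in D$ has boundary density $P(\alpha_l,\cdot)$ on $\partial D$. What does require care is notational---the symbol $\psi_l$ is overloaded between (\ref{Mobius}) and (\ref{Poisson}), and the $\psi_l$ of the Poisson kernel is $\arg\alpha_l$ rather than the rotation parameter of (\ref{Mobius})---and conceptual: one must transport Proposition \ref{prop:1} from the finite-$N$ setting to the thermodynamic limit. The latter is routine (replace empirical measures by densities and use that $h_{\alpha_l}$ acts on them by its boundary Jacobian), but as the remark preceding the statement flags, a fully rigorous treatment would demand more care than this sketch provides.
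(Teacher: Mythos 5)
Your argument is correct and is precisely the standard derivation the paper leaves implicit (the proposition is stated without proof, with a remark deferring to the literature): the pushforward of the uniform density under $g_t^l\in G$ reduces to the boundary Jacobian of the disc automorphism $h_{\alpha_l}$, which is the Poisson kernel at $\alpha_l(t)=g_t^l(0)$. Your observation that the $\psi_l$ in (\ref{Poisson}) is $\arg\alpha_l$ (the paper's $\Phi_l$) rather than the rotation parameter of (\ref{Mobius}) correctly flags a genuine notational overload in the text.
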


\begin{remark}
Functions of the form (\ref{Poisson}) are well known in Mathematics as {\it Poisson kernels} on the unit circle. They are density functions for harmonic measures on $S^1$. They constitute 2-dimensional invariant submanifold for evolution (\ref{chimera}) which is called {\it Poisson manifold}. Indeed, each Poisson kernel (\ref{Poisson}) is uniquely determined by a point $\alpha_l = r_l e^{\Phi_l}$ in the unit disc $\mathbb{D}$. In fact, there is one-to-one (conformally natural) map from the Poisson manifold to the hyperbolic unit disc $\mathbb{D}$. In the context of coupled oscillators, Poisson manifold is also called {\it Ott-Antonsen manifold}. However, this second term is typically used for the (slightly more complicated) manifold that appears after the Ott-Antonsen reduction of the Kuramoto model with non-identical oscillators, see \cite{OA}.
\end{remark}

\begin{remark}
Uniform measure and the delta distribution are also Poisson kernels, obtained from (\ref{Poisson}) for $r_l = 0$ and the limit case $r_l \to 1$.
\end{remark}

\begin{proposition}
\label{prop:3}
Complex number $\alpha_l$ is the mean value (complex order parameter) for the density (\ref{Poisson}). Then, $r_l = |\alpha_l|$ is the real order parameter for (\ref{Poisson}).
\end{proposition}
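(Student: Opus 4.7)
My plan is to reduce the claim to the classical fact that the Poisson kernel has the point $\alpha_l = r_l e^{i\psi_l}$ as its first Fourier coefficient. By definition, the complex order parameter of a probability density $\rho_l$ on $S^1$ is the expected value of $e^{i\varphi}$, that is
\begin{equation*}
\langle e^{i\varphi}\rangle_{\rho_l} = \int_0^{2\pi} e^{i\varphi}\,\rho_l(t,\varphi)\,d\varphi.
\end{equation*}
Thus it suffices to verify that this integral equals $\alpha_l$; the statement about $r_l=|\alpha_l|$ being the real order parameter then follows immediately from the definition of the latter as the modulus of the complex order parameter.

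The cleanest route is via the Fourier expansion of the Poisson kernel. First I would factor the denominator of (\ref{Poisson}) as $1-2r_l\cos(\varphi-\psi_l)+r_l^2 = |1-\bar\alpha_l e^{i\varphi}|^2 = (1-\bar\alpha_l e^{i\varphi})(1-\alpha_l e^{-i\varphi})$ with $\alpha_l = r_l e^{i\psi_l}$, and then expand each geometric factor in powers of $r_l$. This gives the standard expansion
\begin{equation*}
\rho_l(t,\varphi) = \frac{1}{2\pi}\sum_{n\in\mathbb{Z}} r_l^{|n|}\,e^{in(\varphi-\psi_l)}.
\end{equation*}
Multiplying by $e^{i\varphi}$ and integrating term by term, orthogonality of the characters $e^{in\varphi}$ on $[0,2\pi]$ kills every contribution except $n=-1$, yielding $\langle e^{i\varphi}\rangle_{\rho_l} = r_l e^{i\psi_l} = \alpha_l$. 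The uniform convergence needed for term-by-term integration is automatic since $r_l<1$.

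If one prefers to avoid the series, I would instead use contour integration: after the factorisation above, substitute $z=e^{i\varphi}$, $d\varphi = dz/(iz)$, and write
\begin{equation*}
\int_0^{2\pi} e^{i\varphi}\rho_l\,d\varphi = \frac{1-|\alpha_l|^2}{2\pi i}\oint_{|z|=1}\frac{z\,dz}{(1-\bar\alpha_l z)(z-\alpha_l)}.
\end{equation*}
Inside the unit disc the integrand has a single simple pole at $z=\alpha_l$ (the other candidate $z=1/\bar\alpha_l$ lies outside since $|\alpha_l|<1$); the residue there equals $\alpha_l/(1-|\alpha_l|^2)$, and the prefactor cancels it neatly to give $\alpha_l$. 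There is no real obstacle in either approach; the only care required is bookkeeping of the complex-conjugate conventions, so that the argument of $\alpha_l$ lines up with the parameter $\psi_l$ in (\ref{Poisson}) rather than with $-\psi_l$.
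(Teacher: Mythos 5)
Your proof is correct. The paper actually states Proposition \ref{prop:3} without proof, treating it as a classical property of the Poisson kernel, and your argument supplies exactly the standard verification: the expansion $\rho_l = \frac{1}{2\pi}\sum_{n\in\mathbb{Z}} r_l^{|n|}e^{in(\varphi-\psi_l)}$ together with orthogonality isolates the $n=-1$ term and gives $\langle e^{i\varphi}\rangle = r_l e^{i\psi_l} = \alpha_l$, and your residue computation is a correct independent check (the only pole inside $|z|=1$ is at $z=\alpha_l$, with residue $\alpha_l/(1-|\alpha_l|^2)$). Both routes are sound and the sign bookkeeping is handled correctly, so nothing is missing.
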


Using Proposition \ref{prop:3} one can simplify (\ref{WS^2}) to obtain the system for real order parameters $r_A$ and $r_B$. Indeed, in this case ODE's for $\psi_A$ and $\psi_B$ in (\ref{WS^2}) decouple from those for $\alpha_A$ and $\alpha_B$. Hence, ODE's for $\psi_A$ and $\psi_B$ can be neglected. Then, ODE's for $\alpha_A$ and $\alpha_B$ yield four real-valued ODE's. Using rotational invariance, one can introduce the phase shift $\Phi_d = \Phi_B - \Phi_A$ and, without loss of generality, assume that sub-population $A$ is synchronized, i.e. $r_A = 1$. Then we are left with the system for variables $r_B$ and $\psi_d$
\begin{figure}[t]
\centering
  \begin{tabular}{@{}cc@{}}
    \includegraphics[width=.38\textwidth]{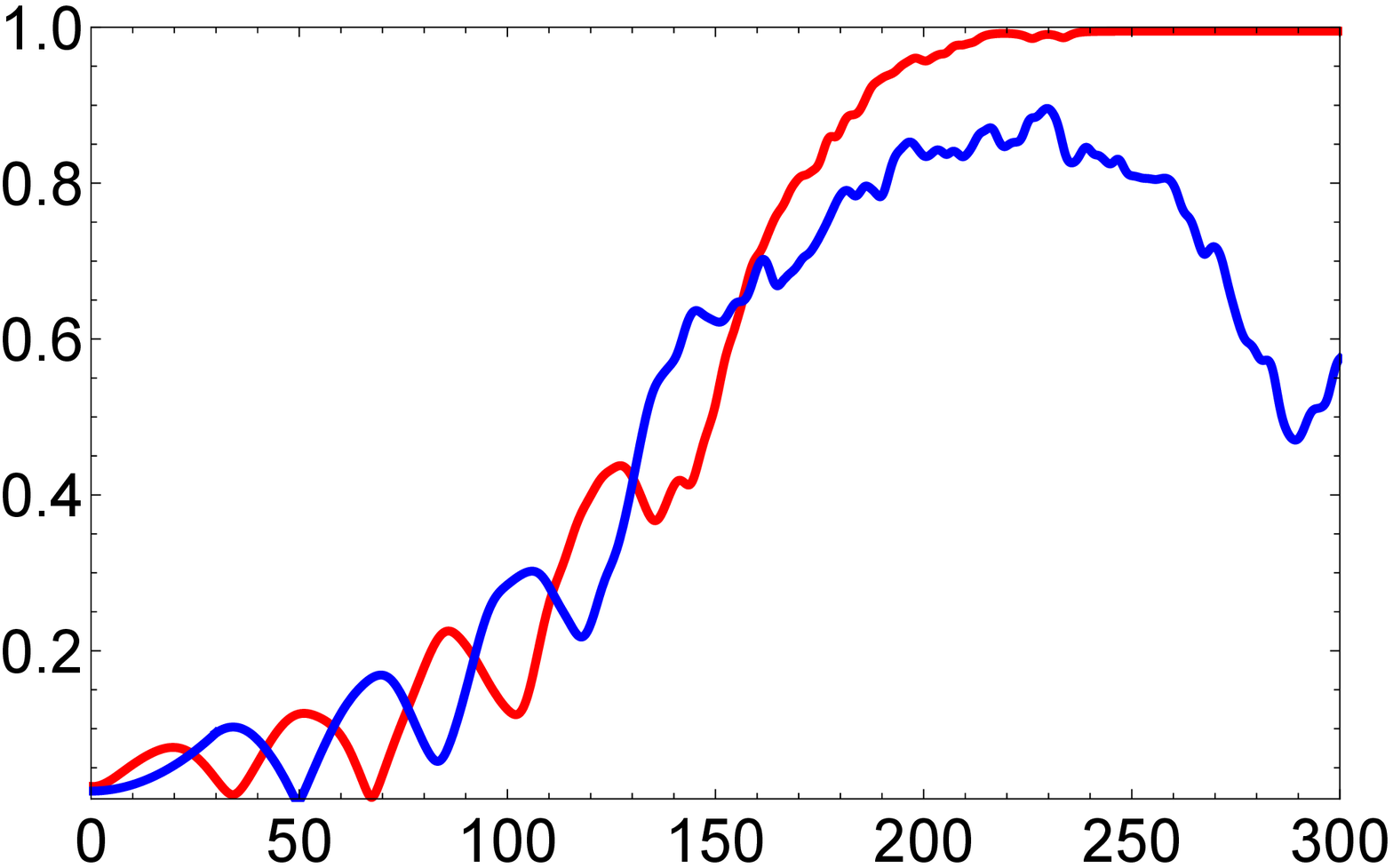} &
    \includegraphics[width=.383\textwidth]{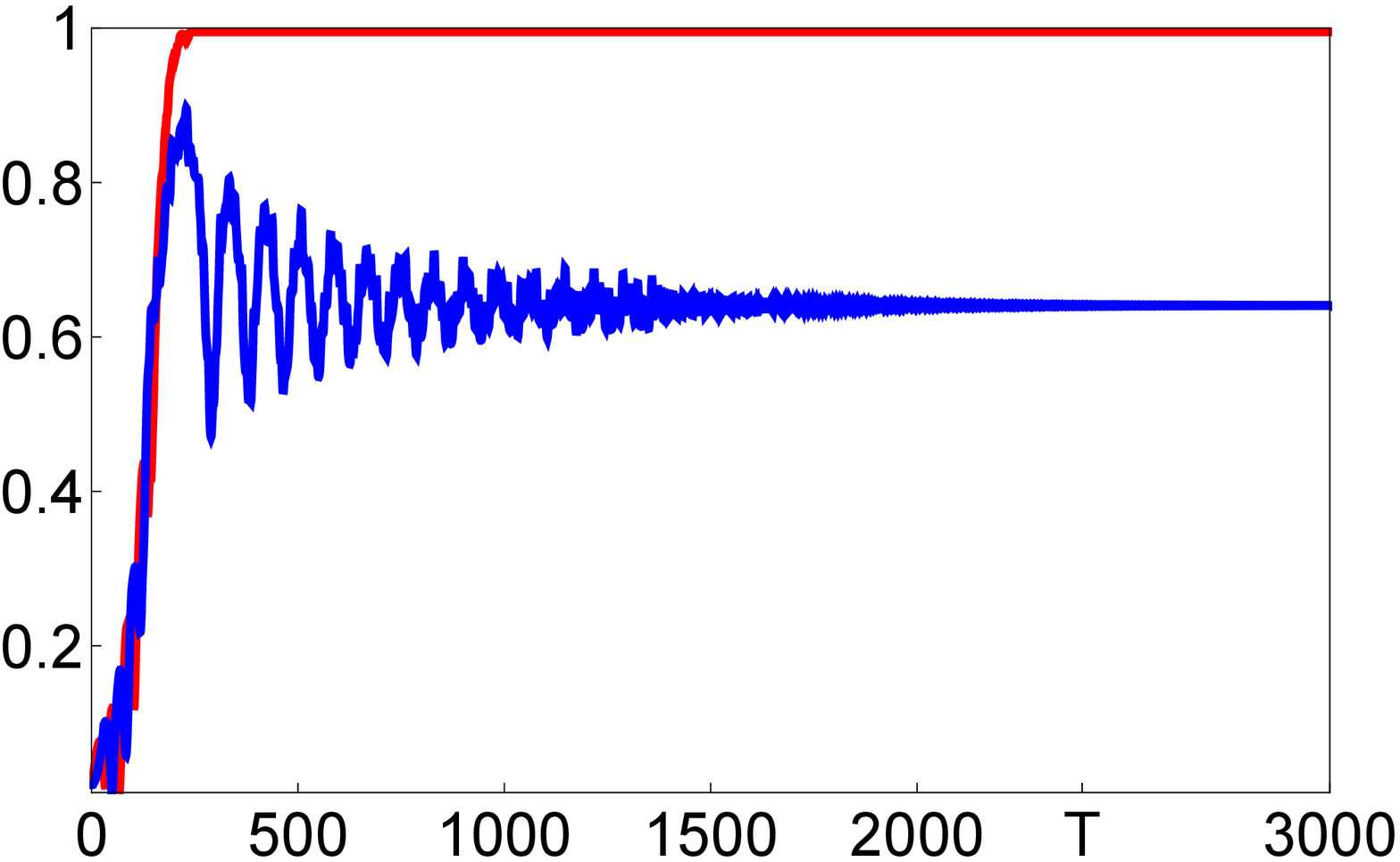}\\
    \quad (a)&\quad (b)
  \end{tabular}
  \caption{\label{fig:1}Real order parameters $r_A(t)$ and $r_B(t)$ for sub-populations $A$ and $B$ in model (\ref{chimera}) with $\mu = 0.623, \nu = 0.377, \beta = \frac{\pi}{2} - 0.1$ (stable chimera) and uniform initial distributions on time intervals (a) $t \in [0,300]$ and (b) $t \in [0,3000]$.}
\end{figure}

\begin{equation}
\label{r_B}
\left \{
\begin{array}{ll}
\dot r_B = \frac{1 - r_B^2}{2}[\eta r_B \cos \beta + \nu \cos(\psi_d - \beta)];\\
\dot \psi_d = \frac{1+r_B^2}{2 r_B} [\eta r_B \sin \beta - \nu \sin(\psi_d - \beta)] - \eta \sin \beta - \nu r_B \sin(\psi_d + \beta).
\end{array}
\right.
\end{equation}

The derivation of (\ref{r_B}) and the consequent bifurcation analysis has been reported in \cite{AMSW} (see also \cite{PA}). This system is simple enough that it can be studied analytically. First of all, it is easy to check that (\ref{r_B}) always has a stable equilibrium with $r_B = 1$. This equilibrium corresponds to the full synchronization. However, simple analysis shows that for certain parameter values, there also exists another stable equilibrium with $0 < r_B < 1$. This second equilibrium corresponds to the stable chimera state. \footnote{In other words, (\ref{r_B}) is bi-stable. However, the full system with the additional ODE for $r_A$ is three-stable: it admits full synchronization ($r_A = r_B = 1$) and two symmetric chimeras ($r_A < 1, r_B = 1$ and vice versa).}

Furthermore, if parameter $A = \eta - \nu$ goes through a certain threshold this equilibrium loses stability and $r_B$ starts to oscillate. This can be checked analytically by using the Hopf bifurcation theorem. This switch to periodic oscillations of $r_B$ leads to a new phenomenon, that is named {\it breathing chimera} in \cite{AMSW}.

We are now in position to explain the main point of the present paper. For instance, adopt Assumption \ref{assum:1} and consider the stable chimera. Focus on desynchronized sub-population $B$ and fix the moment $T$ sufficiently large, so that $r_B(t) = const$ for all $t>T$ (see Figure \ref{fig:1}b). By summarizing everything said in this Section we can state the two facts.

\begin{fact}
\label{fac:1}
From \cite{MMS} we know that all oscillators evolve by actions of the M\" obius group. In other words, for each $t>T$ there exists a transformation $g \in G$, such that $z^B_j(t) = g(z^B_j(T))$, for $j=1,\dots,N_B$.
\end{fact}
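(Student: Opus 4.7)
The plan is to obtain the claim as an almost immediate corollary of Proposition \ref{prop:1}, invoking only the group structure of $G$. By that proposition, there exists a one-parameter family $\{g_t^B\}_{t\geq 0}\subset G$ such that $z_j^B(t)=g_t^B\bigl(z_j^B(0)\bigr)$ for every $j=1,\dots,N_B$ and every $t\geq 0$. Evaluating at $t=T$ in particular gives $z_j^B(T)=g_T^B\bigl(z_j^B(0)\bigr)$. Thus the entire initial configuration $\{z_j^B(0)\}$ is recorded (up to the freedom of constants of motion) in the element $g_T^B\in G$.

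The next step is to invert. Since $G$ is a group of biholomorphisms of the unit disc, the element $g_T^B$ admits an inverse $(g_T^B)^{-1}\in G$. Setting
$$ g \; := \; g_t^B \circ (g_T^B)^{-1}, $$
closure of $G$ under composition ensures $g\in G$. A direct substitution then gives
$$ g\bigl(z_j^B(T)\bigr) \; = \; g_t^B\!\left((g_T^B)^{-1}\bigl(g_T^B(z_j^B(0))\bigr)\right) \; = \; g_t^B\bigl(z_j^B(0)\bigr) \; = \; z_j^B(t), $$
which is precisely the required identity, valid uniformly in $j=1,\dots,N_B$.

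I do not anticipate any real obstacle. The only non-trivial ingredient is the group structure of $G$, which is a standard fact about the automorphisms of the unit disc and can be verified directly by composing two transformations of the form (\ref{Mobius}) and checking that the result admits the same parametrization. Note also that the hypothesis $r_B(t)\equiv \mathrm{const}$ for $t>T$ is not actually used in Fact \ref{fac:1}; it merely fixes the dynamical regime (the stable chimera) in which the transformation $g$ becomes interesting, since the density $\rho_B$ is then already frozen (up to rigid rotation) while the underlying element of $G$ continues to evolve along a non-trivial trajectory — and this is the ``hidden'' motion that the subsequent facts are meant to expose.
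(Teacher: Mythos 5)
Your argument is correct and is essentially the paper's own reasoning: the authors derive this from Proposition \ref{prop:1} together with the group structure of $G$, formalized via the notation $g_{[T,t]} = g_t^B \circ (g_T^B)^{-1}$ introduced in Section \ref{sec:3}. Your observation that the constancy of $r_B$ plays no role in Fact \ref{fac:1} itself is also accurate.
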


On the other hand, Proposition \ref{prop:2} asserts that the density of oscillators in desynchronized group is given by (\ref{Poisson}). But, densities (\ref{Poisson}) are uniquely defined by their mean values $\alpha_B$. Since $r_B(t) = |\alpha(t)| = const$ for $t > T$, this means that $\alpha_B(t)$ performs simple rotations in the unit disc after moment $T$. This brings as to the following

\begin{fact}
\label{fac:2}
Density of oscillators in sub-population $B$ performs simple rotations after moment $T$. In other words, for each $t > T$ the density $\rho_B(t,\varphi)$ is a simple rotation of $\rho_B(T,\varphi)$.
\end{fact}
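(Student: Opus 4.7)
The plan is to derive Fact~\ref{fac:2} directly from the functional form of the Poisson kernel (\ref{Poisson}), using only the equilibrium value $r_B(t)\equiv r_B$ granted for $t>T$ by the stable chimera scenario. The argument is short because Proposition~\ref{prop:2} has already done the serious work: it compresses the infinite-dimensional evolution of the density $\rho_B(t,\cdot)$ into the two scalar parameters $r_B(t)$ and $\psi_B(t)=\arg\alpha_B(t)$.

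First, I would apply Proposition~\ref{prop:2} at an arbitrary time $t$ to write
\[
\rho_B(t,\varphi) \;=\; \frac{1}{2\pi}\cdot\frac{1-r_B^2(t)}{1-2r_B(t)\cos(\varphi-\psi_B(t))+r_B^2(t)}.
\]
Next, I would substitute the standing hypothesis that $r_B(t)\equiv r_B$ for every $t>T$; this is exactly what defines the stable chimera regime, as read off from the equilibrium of the reduced system (\ref{r_B}). The resulting expression involves $\varphi$ and $\psi_B(t)$ only through the combination $\varphi-\psi_B(t)$.

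Setting $\Delta(t):=\psi_B(t)-\psi_B(T)$, this dependence immediately gives
\[
\rho_B(t,\varphi)\;=\;\rho_B\bigl(T,\,\varphi-\Delta(t)\bigr), \qquad t>T,\ \varphi\in[0,2\pi),
\]
which is exactly the assertion that $\rho_B(t,\cdot)$ is a rigid rotation of $\rho_B(T,\cdot)$ on the circle~$S^1$.

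Honestly, I do not expect a genuine obstacle here: once Proposition~\ref{prop:2} is accepted and $r_B$ is held constant, the rotational reduction is a mechanical consequence of the Poisson-kernel formula. If one wanted in addition to pin down the angular velocity $\dot\Delta$, one would have to return to the ODE for $\alpha_B$ in (\ref{WS^2}) and use that, at the equilibrium of (\ref{r_B}), $\arg\alpha_B$ evolves linearly; but this is not required by the statement. In fact, the point of isolating Fact~\ref{fac:2} appears to be precisely to highlight this deceptive simplicity at the density level, in contrast to the genuinely richer dynamics on $G\times G$ produced by Proposition~\ref{prop:1}, which the remainder of the paper is evidently devoted to unravelling.
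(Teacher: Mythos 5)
Your proposal is correct and follows essentially the same route as the paper: invoke Proposition~\ref{prop:2} to reduce the density to the Poisson kernel parametrized by $\alpha_B(t)$, use the constancy of $r_B(t)=|\alpha_B(t)|$ for $t>T$ to conclude that only the phase of $\alpha_B$ changes, and read off that this shifts $\varphi$ by a rigid rotation. The paper states this slightly more abstractly (the Poisson kernel is uniquely determined by the point $\alpha_B$ in the disc, which moves on a circle), whereas you verify the rotation identity directly from the kernel formula; the content is identical.
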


This brings us to the main question of the present paper.

\begin{question}
\label{que:1}
Can we conclude from facts \ref{fac:1} and \ref{fac:2} that all oscillators in sub-population $B$ evolve by simple rotations after moment $T$?
\end{question}

This question is the starting point of the present investigation. In order to put it as simple as possible, we pass to the rotating coordinate frame. Then the distribution of oscillators in sub-population $B$ is stationary after moment $T$ and Question \ref{que:1} can be reformulated as follows: Is it possible that all oscillators evolve by the same one-parametric family of non-trivial M\" obius transformations, while their density remains stationary?

There are two hints indicating that there might be something subtler in asymptotic dynamics of the desynchronized sub-population. First, it has been reported previously that oscillators in desynchronized sub-population satisfy a pretty complicated equation (see eq. (6) in \cite{Laing}). Second, recent findings of \cite{EM2} suggest that reduction of the classical Kuramoto model (with a single population) to the Poisson manifold are not as simple as it appears at the first glance.

In sections \ref{sec:4} and \ref{sec:5} we will demonstrate that asymptotic dynamics in desynchronized sub-population are indeed more involved then they appear. In Section \ref{sec:4} we will conduct simulations in order to analyze both the stable and the breathing chimera. In Section \ref{sec:6} we will abandon Assumption \ref{assum:1} in order to investigate what is happening on generic 3-dimensional invariant sub-manifolds. Our findings will underline some crucial qualitative differences between dynamics on and off the Poisson manifold.

Before proceeding further, we introduce more rigorous mathematical framework and clarify some notations. This will be done in the next Section.

\section{Mathematical setup and preliminaries}\label{sec:3}
Group $G$ operates on the unit circle $S^1$, on unit disc $\mathbb{D}$ and on space ${\cal P}(S^1)$ of probability measures on the circle. Explicitly
$$
g \cdot z = g(z), \mbox{  if  } z \in S^1 \cup \mathbb{D};
$$
$$
g \cdot \mu (A) = g_* \mu(A) = \mu(g^{-1}(A)), \mbox{  if  } \mu \in {\cal P}(S^1) \mbox{  and  } A \subseteq S^1 \mbox{  is a Borel set}.
$$
Hence, $g_* \mu$ denotes a measure obtained by the action of $g \in G$ on measure $\mu$.

The state of sub-population $B$ at moment $t$ will be represented by a probability measure $\mu_B(t) \in {\cal P}(S^1)$. If $\mu_B(0)$ is the initial state, then the state at each moment $t$ is
$$
\mu_B(t) = g^B_{t \; \; *} \mu_B(0).
$$

It is obvious from (\ref{Mobius}) that parameters $\alpha_A$ and $\alpha_B$ are images of zero under maps $g^A_t$ and $g^B_t$ respectively, that is
$$
\alpha_A(t) = g^A_t(0) \mbox{  and  } \alpha_B(t) = g_t^B(0).
$$

As already explained, system (\ref{chimera}) generates a trajectory on group $G \times G$. In the next Section we will investigate this trajectory through simulations. This will be done by observing action of families $g_t^A$ and $g_t^B$ on zero (center of the disc). Hence, we will simulate system (\ref{chimera}) and depict $\alpha_A(t)$ and $\alpha_B(t)$ in an attempt to understand the corresponding dynamics on $G \times G$. We will be mainly interested in family $g_t^B$ that acts on the desynchronized sub-population.

Notice, however, that transformation $g \in G$ is not uniquely determined by its action on zero (i.e. by the point $\alpha$). Indeed, there are infinitely many M\" obius transformations that map zero to a given point $\alpha \in \mathbb{D}$. In essence, we will depict the projection of trajectory on $G \times G$ on $\mathbb{D} \times \mathbb{D}$. This projection map has fibers $S^1 \times S^1$ on which variables $\psi_A$ and $\psi_B$ "live".

In whole, simulations will not give us the full picture of the trajectory on $G \times G$. However, for our line of reasoning the following simple fact is important

\begin{proposition}
\label{prop:4}
$g(0) = 0$ if and only if $g$ is a simple rotation.
\end{proposition}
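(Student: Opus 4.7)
The plan is to argue directly from the explicit parametrization (\ref{Mobius}) of the group $G$, which every element of $G$ admits. No deeper machinery is required; the statement is essentially a one-line computation once the normal form is in hand.

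First I would dispatch the easy direction. If $g$ is a simple rotation, then $g(z) = e^{i\psi} z$, which corresponds in (\ref{Mobius}) to the special case $\alpha = 0$. Plugging $z = 0$ into this formula immediately gives $g(0) = 0$.

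For the converse, suppose $g(0) = 0$ with $g$ written in the form (\ref{Mobius}). Substituting $z = 0$ into (\ref{Mobius}) collapses the numerator to $\alpha$ and the denominator to $1$, so $g(0) = \alpha$. The hypothesis $g(0) = 0$ therefore forces $\alpha = 0$. Re-inserting this into (\ref{Mobius}) leaves $g(z) = e^{i\psi} z$, which is precisely a rotation about the origin.

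The only potential obstacle is logical rather than computational: one must know that the parametrization (\ref{Mobius}) is \emph{faithful}, i.e., that every $g \in G$ really is of this form, so that arguing from this normal form is legitimate. This is a standard fact about automorphisms of the unit disc (it follows, for instance, from the Schwarz lemma applied to $g$ composed with a Blaschke factor sending $g^{-1}(0)$ to $0$), and it has already been invoked earlier in the paper when $G$ was introduced. Given this, the proposition reduces to the two observations above.
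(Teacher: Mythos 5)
Your proof is correct and is exactly the argument the paper implicitly relies on: the paper states Proposition~\ref{prop:4} without proof, but later notes that it is ``obvious from (\ref{Mobius})'' that $g(0)=\alpha$, which is precisely your computation. Nothing further is needed, since the surjectivity of the normal form (\ref{Mobius}) is asserted when $G$ is introduced.
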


Hence, the transformation $g_t^B$ is a rotation if and only if $\alpha_B(t) = g_t^B(0) = 0$. Moreover, $g_t^B$ is identity map for $t=0$.

Furthermore, we will be interested in action of families $g_t^A$ and $g_t^B$ on different time intervals. This requires a refinement of notations.

Suppose that the initial state is $\mu(0)$ and the state at the moment $t$ is $\mu(t)$. Then, there exists a M\" obius transformation $m \in G$, such that $\mu(t) = m_* \mu(0)$. We will denote this transformation $m$ by $g_t = g_{[0,t]}$. Hence,
$$
\mu(t) = g_{[0,t] \; *} \mu(0).
$$
In an analogous way, if states at moments $t_1$ and $t_2$ are given by $\mu(t_1)$ and $\mu(t_2)$ respectively, then there exists $m \in G$, such that $\mu(t_2) = m_* \mu(t_1)$. We will denote this transformation $m$ by $g_{[t_1,t_2]}$. Hence,
$$
\mu(t_2) = g_{[t_1,t_2] \; *} \mu(t_1).
$$
Furthermore, we will use notation $\alpha(t_1;t_2)$ for the image of zero under M\" obius transformation $g_{[t_1,t_2]}$, that is
$$
\alpha(t_1;t_2) = g_{[t_1,t_2]}(0).
$$
Underline that $\alpha(t;t) = 0$ for any $t$, since $g_{[t,t]}$ is the identity map.

In order to understand asymptotic behavior of the trajectory on group $G \times G$ we will depict paths $\alpha(0;t)$ and $\alpha(T;t)$ in the unit disc. This will enable us to answer Question \ref{que:1}.
\begin{figure}[t]
\centering
  \begin{tabular}{@{}cc@{}}
    \includegraphics[width=.335\textwidth]{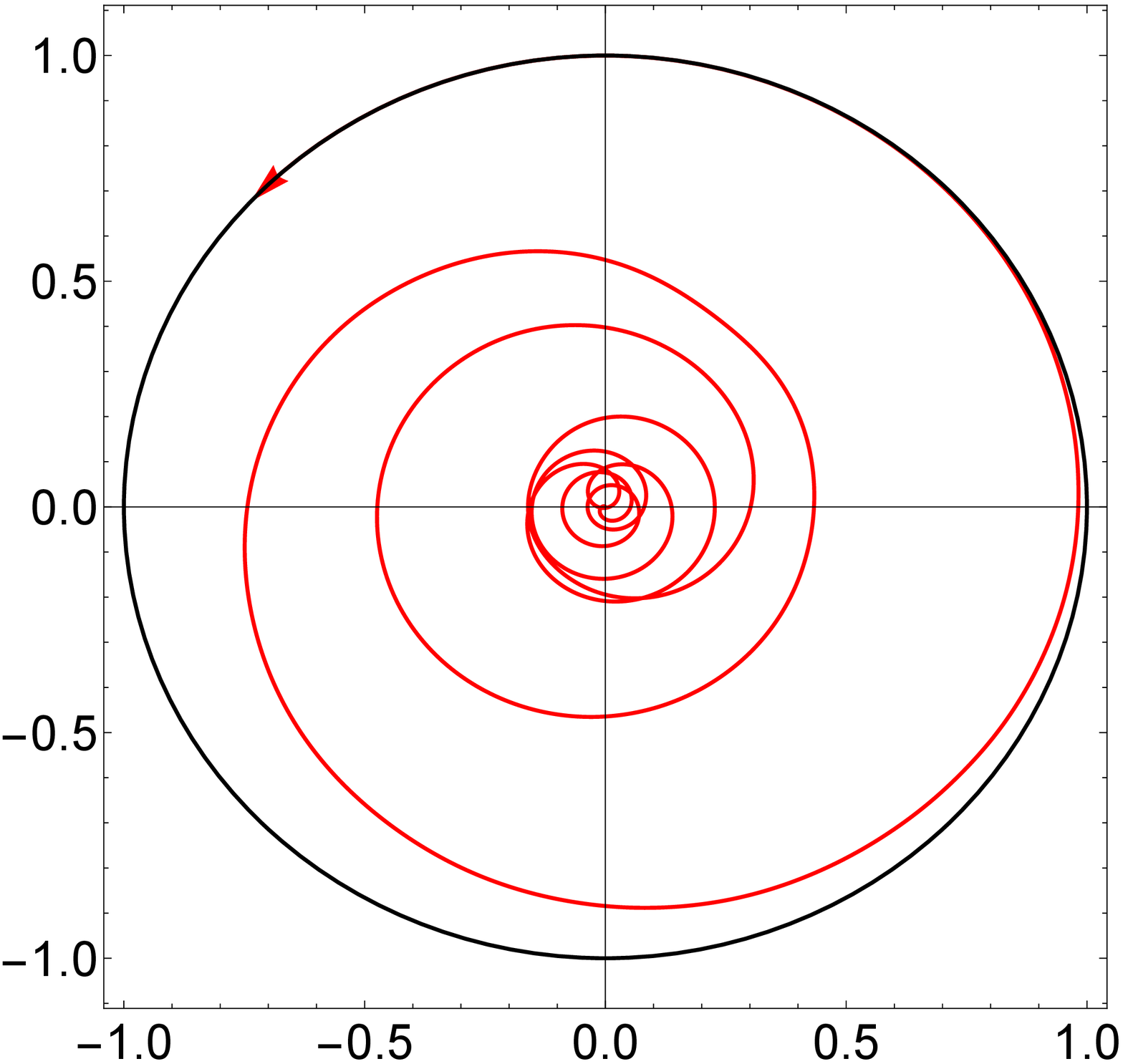}&
    \includegraphics[width=.335\textwidth]{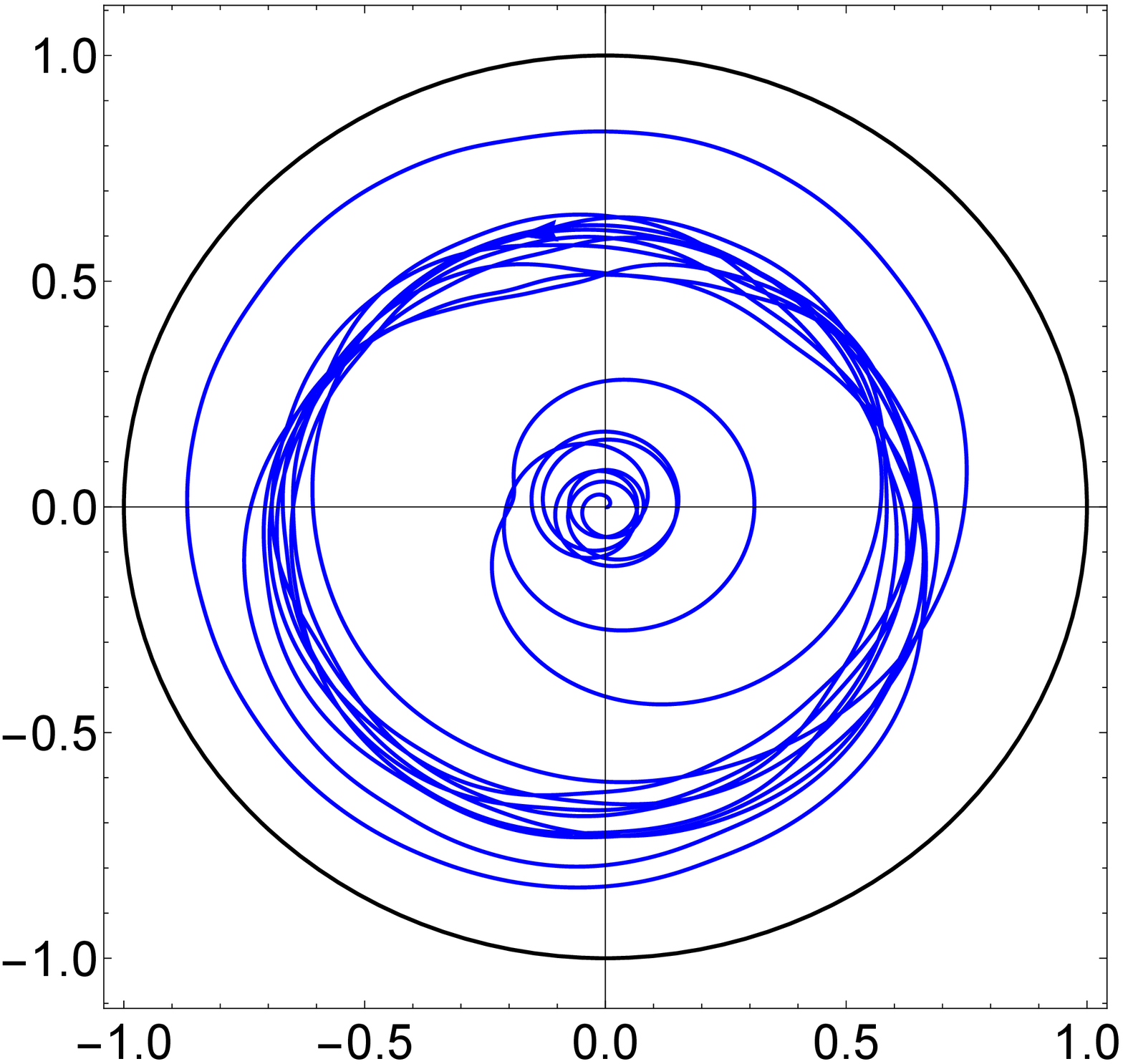}\\
 \quad (a)&\quad (b)
  \end{tabular}
  \caption{\label{fig:2}M\" obius group actions as the system (\ref{chimera}) evolves towards a stable chimera state: (a) $\alpha_A(0;t)$ and (b) $\alpha_B(0;t)$ on time interval $t \in [0,1000]$. Initial distributions of oscillators are uniform.}
\end{figure}

\section{Simulations: M\" obius transformations in stable and breathing chimeras on the Poisson manifold}\label{sec:4}
Throughout this Section we work under Assumption \ref{assum:1}, i.e. we sample initial positions for both sub-populations from the uniform distribution on the circle. Then, due to Proposition \ref{prop:2}, states of both sub-populations evolve on the Poisson manifold. We pick $N_A = N_B = 500$ oscillators in each simulation.

After sampling initial positions, we solve (\ref{chimera}) and choose three arbitrary oscillators from each sub-population. We compute positions of these oscillators at each moment $t$. Positions of these triples of points at moments $t_1$ and $t > t_1$ uniquely determine M\" obius transformations $g_{[t_1,t]}^l, \, l=A,B$. We depict actions of these transformations on zero, i.e. the point $\alpha_l(t_1;t)$.

Since $g_{[t,t]}^l$ is the identity transformation, $\alpha_l(t_1;t)$ starts from zero in all simulations.

\subsection{Case I: Simulations of stable chimera at the Poisson manifold}

Figure \ref{fig:1} shows evolution of real order parameters of the two sub-populations as the system evolves towards the chimera state. We observe an unpredictable competition where sub-populations alternately increase their coherence at the expense of each other. At a certain moment one sub-population rapidly synchronize leaving another desynchronized.

Figure \ref{fig:2} demonstrates actions of the two families of M\" obius transformations at time intervals $[0,t]$. For the synchronized group $\alpha_A(0;t) = g_{[0,t]}^A(0)$ goes to infinitely distant horizon, i.e. $|\alpha_A(0;t)| \to 1$ when $t \to \infty$. On the other hand, $\alpha_B(0;t)$ is constrained inside a circle in the unit disc, i.e. $|\alpha_B(0;t)| \to r_B(\infty) < 1$ for all $t$.

Simulation results presented in Figure \ref{fig:2} are expected and do not help to answer Question \ref{que:1}. In order to access asymptotic behavior of the desynchronized sub-population, fix a sufficiently large moment $T$, such that $r_B(t) = const$ for $t>T$. In this way we investigate what happens once the stable chimera is achieved. We investigate transformations acting on state $\mu_B(T)$. To this end we depict the path $\alpha_B(T;t) = g^B_{[T,t]}(0)$. As shown in Figure \ref{fig:3}, $\alpha_B(T,t)$ does not stay at zero at all times, however it repeatedly turns to zero.
\begin{figure}[t]
\centering
  \begin{tabular}{@{}c@{}}
    \includegraphics[width=.335\textwidth]{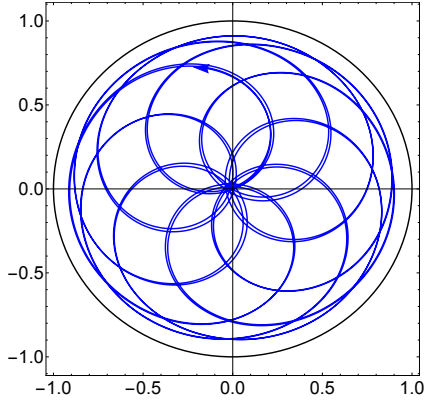}
  \end{tabular}
  \caption{\label{fig:3}M\" obius group action for desynchronized sub-population $B$ in stable chimera state: $\alpha_B(1000;t)$ at time interval $t \in [1000,1500]$. Notice that $\alpha_B(1000;t)$ returns to zero infinitely many times.}
\end{figure}

These results demonstrate that oscillators $z_j^B(t)$ in desynchronized sub-population {\bf do not} evolve by simple rotations at the state of stable chimera. \footnote {Indeed, due to Proposition \ref{prop:4}, if they would evolve by simple rotations, then $\alpha_B(T,t)$ would be zero for all moments $t>T$.}

However, there exists a sequence of moments $t_1,t_2,\ldots$, such that $\alpha_B(T;t_1) = \alpha_B(T;t_2) = \cdots = 0$. This implies that $g^B_{[T,t_1]}, g^B_{[T,t_2]}, \ldots$ {\bf are} simple rotations.

In whole, dynamics in the state of stable chimera is much more involved then it looks at the first glance. In fact, it is a bit puzzling: each individual oscillator evolves by non-trivial M\" obius transformations, while their density evolves by simple rotations. This will be discussed in sections \ref{sec:5} and \ref{sec:7}. Before that, we examine what is happening in the state of the breathing chimera.

\subsection{Case II: Simulations of breathing chimera at the Poisson manifold}

By slightly modifying coupling strengths in system (\ref{chimera}) we obtain the breathing chimera (compare parameter values in captions under figures \ref{fig:1} and \ref{fig:4}). Then, equilibrium $r_B(t) \equiv a$ in system (\ref{r_B}) loses stability, undergoes the Hopf bifurcation and $r_B(t)$ starts to oscillate, as shown in \cite{AMSW}. Figure \ref{fig:4} shows evolution of real order parameters as the system evolves towards breathing chimera.

Figure \ref{fig:5} provides an insight into actions of $g_{[0,t]}^A$ and $g_{[0,t]}^B$ for synhronized and desynchronized sub-population respectively. We do not see qualitative differences with the stable chimera, Figure \ref{fig:2}.

Unlike the stabe chimera, real order parameter $r_B(t)$ is not constant after certain moment $T$. This means that densities of desynchronized sub-population at different times are not related by a simple rotations. However, there exists a sequence $t_1, t_2, \ldots$, such that $r_B(T) = r_B(t_1) = r_B(t_2) = \cdots$. Hence, densities at $T, t_1, t_2, \ldots$ {\bf are} related by simple rotations.
\begin{figure*}[t]
\centering
  \begin{tabular}{@{}cc@{}}
    \includegraphics[width=.38\textwidth]{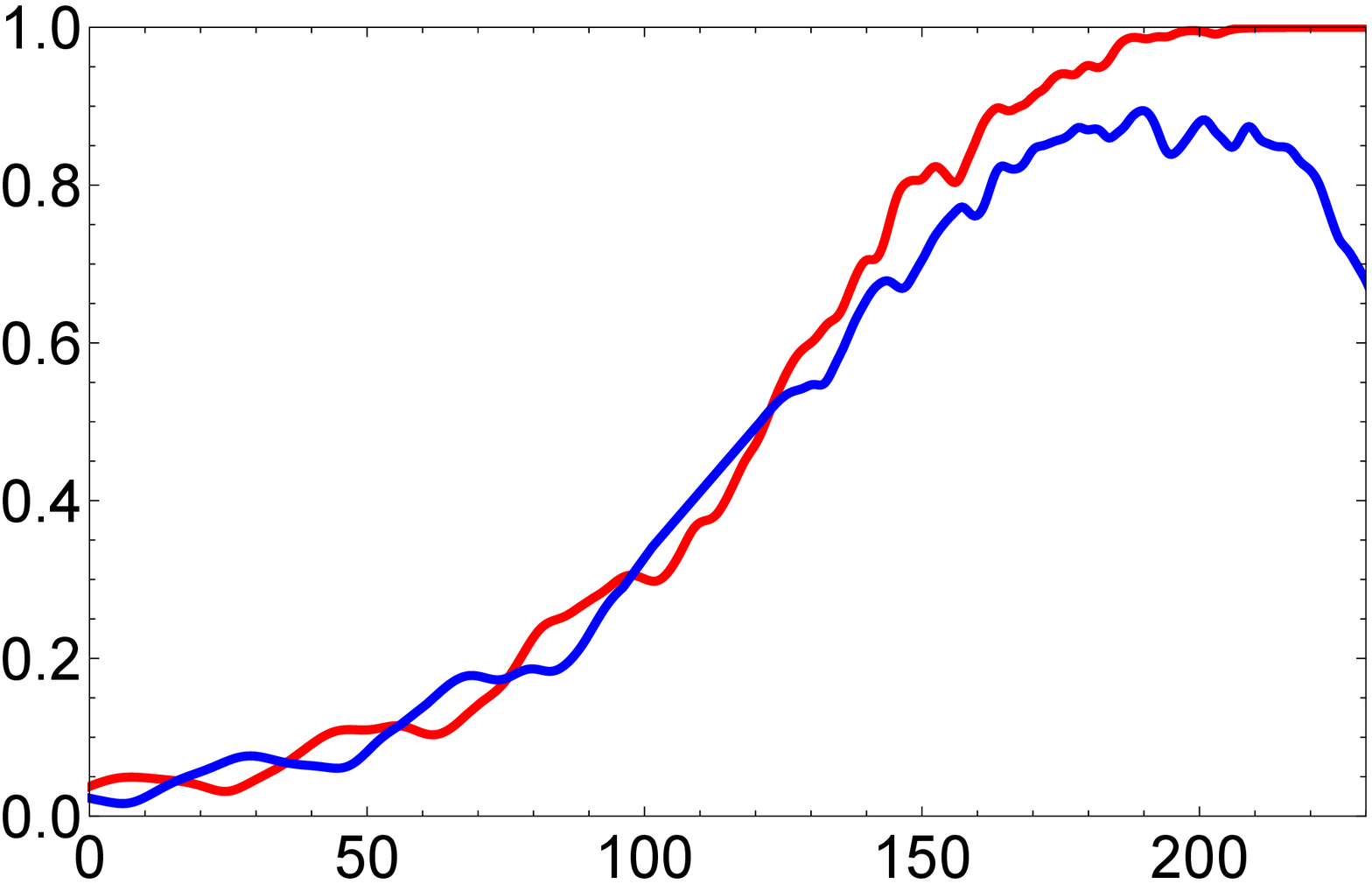}&
     \includegraphics[width=.397\textwidth]{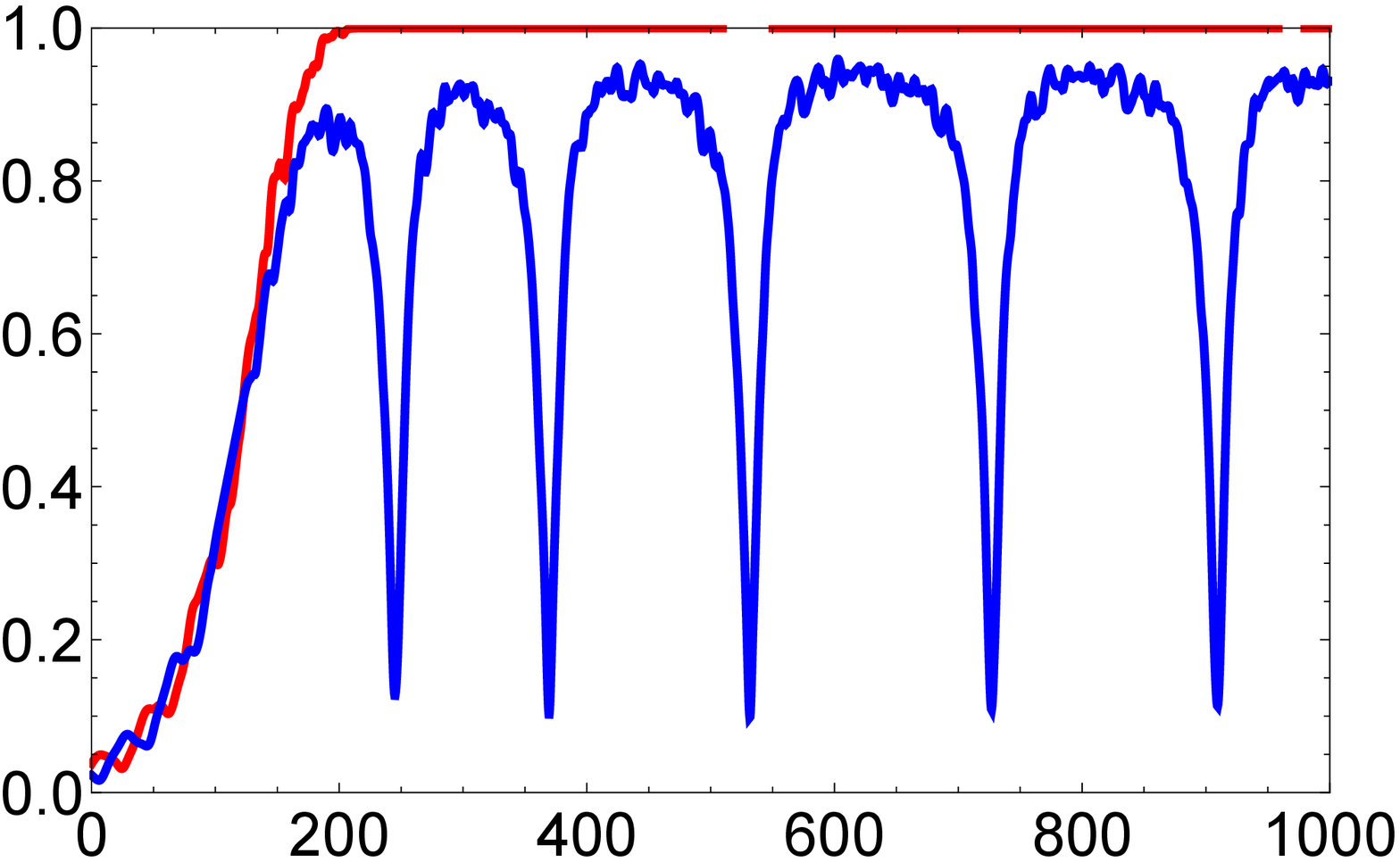}\\
\quad (a)&\quad (b)
  \end{tabular}
  \caption{\label{fig:4}Real order parameters $r_A(t)$ and $r_B(t)$ for sub-populations $A$ and $B$ in model (\ref{chimera}) with $\mu = 0.675, \nu = 0.325, \beta = \frac{\pi}{2} - 0.1$ (breathing chimera) and uniform initial distributions on time intervals (a) $t \in [0,230]$ and (b) $t \in [0,1000]$.}
\end{figure*}
\begin{figure*}[t]
\centering
  \begin{tabular}{@{}cc@{}}
    \includegraphics[width=.33\textwidth]{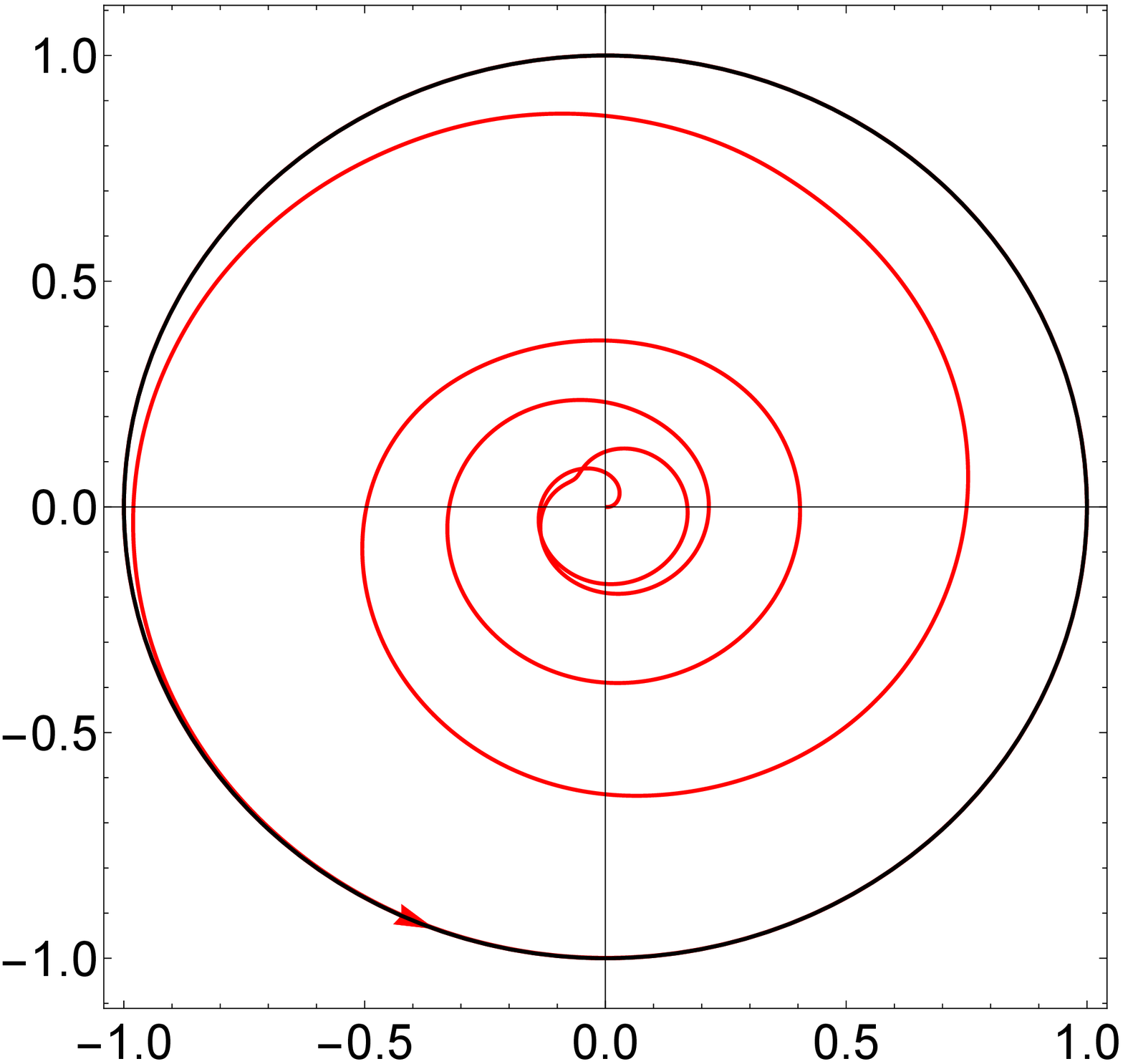}&
    \includegraphics[width=.33\textwidth]{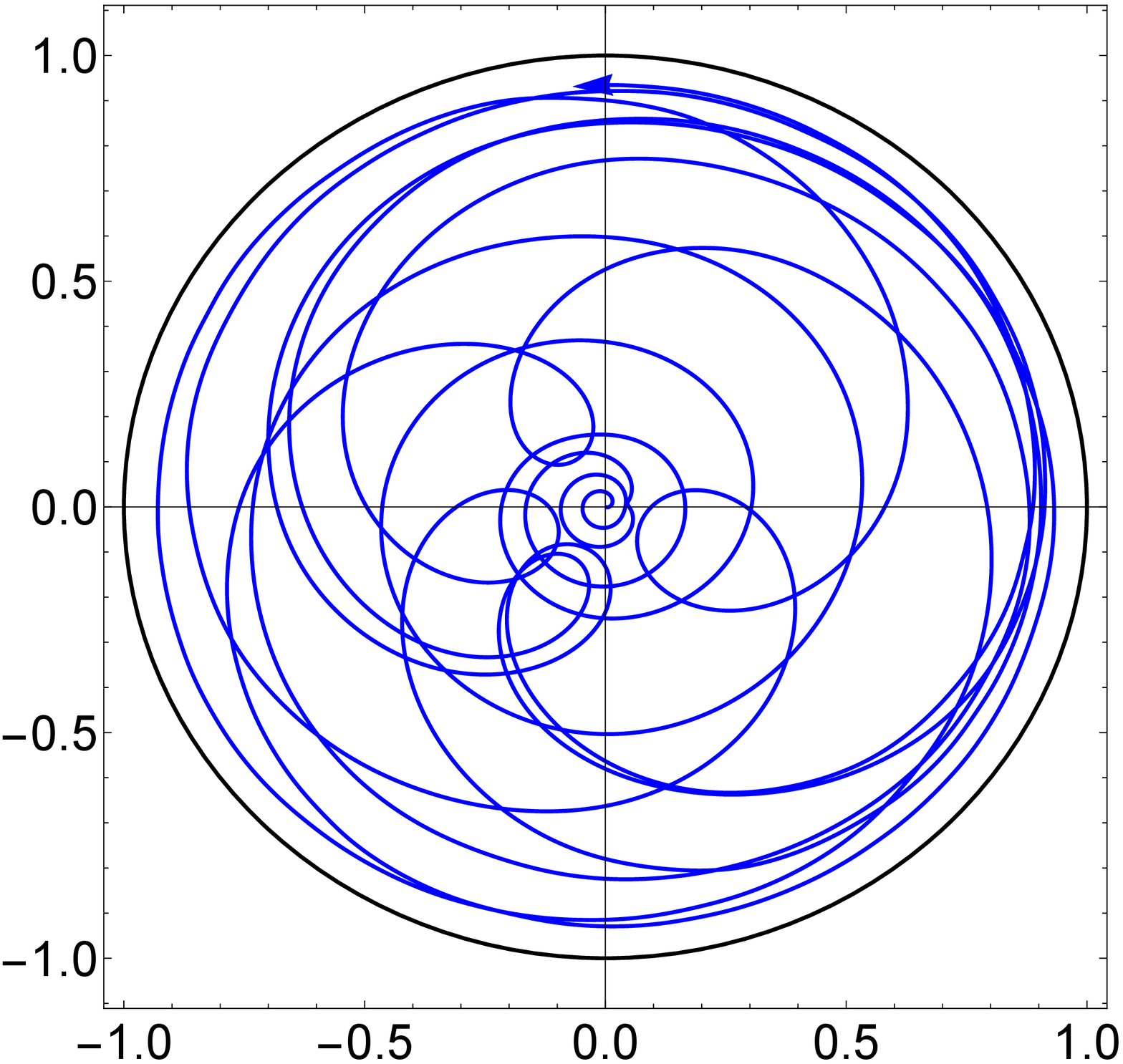}\\
\quad (a)&\quad (b)
  \end{tabular}
  \caption{\label{fig:5}M\" obius group actions as the system (\ref{chimera}) evolves towards breathing chimera: (a) $\alpha_A(0;t)$ and (b) $\alpha_B(0;t)$ on time interval $t \in [0,1000]$. Initial distributions of oscillators are uniform.}
\end{figure*}

This brings us to the following question: Are transformations $g_{[T,t_1]}^B, g_{[T,t_2]}^B, \cdots$ simple rotations?

In order to answer this question, we depict actions of $g_t^A$ and $g_t^B$ after certain (sufficiently large) moment $T$. Results are shown in Figure \ref{fig:6}. We observe one important difference with the stable chimera (Figure \ref{fig:3}): now parameter $\alpha_B(T;t)$ never returns to zero. This suggests that $g_{[T,t]}^B$ {\bf are never} rotations, not even at moments $t=t_1, t= t_2, \cdots$.

\section{Discussion: M\" obius group actions in the stable chimera state on the Poisson manifold}\label{sec:5}
The Poisson manifold can be identified with the hyperbolic disc $\mathbb{D}$. As we have seen, dynamics on the Poisson manifold in the stable chimera state are very simple: sub-population $A$ is synchronized, while centroid of sub-population $B$ evolves along a circle inside unit disc $\mathbb{D}$.
\begin{figure}[t]
\centering
  \begin{tabular}{@{}cc@{}}
    \includegraphics[width=.33\textwidth]{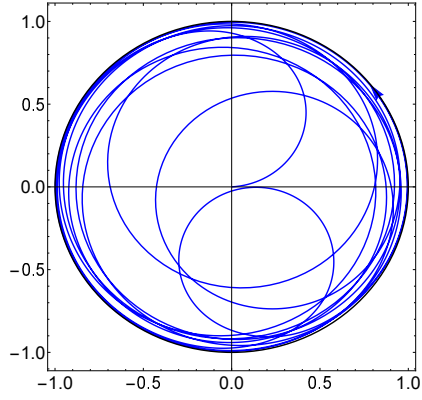}
  \end{tabular}
  \caption{\label{fig:6}M\" obius group action for desynchronized sub-population $B$ in breathing chimera: $\alpha_B(1000;t)$ at time interval $t \in [1000,1500]$. Initial distributions of oscillators are uniform. Notice that $\alpha_B(1000;t)$ does not return to zero.}
\end{figure}

However, simulations from the previous Section suggest that the corresponding trajectories on 6-dimensional manifold $G \times G$ are not that simple. Essentially, we have a projection map $\xi: G \times G \to \mathbb{D} \times \mathbb{D}$. This map has 2-dimensional fibers $S^1 \times S^1$ and dynamics on these fibers affect dynamics of oscillators, although this impact is not visible when observing only densities. These fibers come from decomposition $G \backslash S^1 = \mathbb{D}$ of the M\" obius group.

In this Section we conduct an analytic study of trajectories on $G \times G$ that are generated by system (\ref{chimera}). We start with the observation that for $t > T$ these dynamics are restricted to an invariant 3-torus ${\mathbb T}^3 \subset G \times G$. This torus is parametrized by angles $\Phi_A, \Phi_B$ and $\psi_B$. Since $r_A(t) = 1$ and $r_B(t) = const$ for $t>T$, we have that $\alpha_A(t) = e^{i \Phi_A(t)}$ and $\alpha_B(t) = r_B e^{i \varphi_B(t)}$. By plugging these expressions into (3) and discarding the ODE for $\psi_A$ as irrelevant, we obtain the dynamical system on ${\mathbb T}^3$
\begin{equation}
\label{3-torus}
\left\{
\begin{array}{lll}
\dot \Phi_A = f_A e^{i \Phi_A} + \omega + \bar f_A e^{-i \Phi_A};\\
\dot \Phi_B = r f_B e^{i \Phi_B} + \omega + \frac{1}{r} \bar f_B e^{-i \Phi_B};\\
\dot \psi_B = r f_B e^{i \Phi_B} + \omega + r \bar f_B e^{-i \Phi_B}.
\end{array}
\right.
\end{equation}
We will focus on the dynamics on an invariant 2-torus ${\mathbb T}^2 \subset G$ which is described by angles $\Phi_B$ and $\psi_B$ corresponding to the desynchronized sub-population.

We plug expressions for $f_A$ and $f_B$ into (\ref{3-torus}) and impose the condition that the right hand side of ODE for $\Phi_B$ must be real. This yields
\begin{equation}
\label{expr}
\nu \cos (\Phi_A - \Phi_B - \beta) + r \mu \cos \beta = 0.
\end{equation}
Notice that (\ref{expr}) is precisely the necessary condition for the stable chimera to exist in the model \cite{AMSW}. Then the system on ${\mathbb T}^2$ is rewritten as
\begin{equation}
\label{2-torus}
\left\{
\begin{array}{rcl}
\dot \Phi_B &=& - \frac{\mu}{2}(r^2 + 1) \sin \beta + \omega + (r + \frac{1}{r}) \frac{\nu}{2} \sin(\Phi_A - \Phi_B - \beta);\\
\dot \psi_B &=& - r^2 \mu \sin \beta + \omega + r \nu \sin(\Phi_A - \Phi_B - \beta).
\end{array}
\right.
\end{equation}
Substitution yields ODE for the difference between the two phases
$$
\frac{d}{dt}(\psi_B - \Phi_B) = (r - \frac{1}{r}) (\nu \sin(\Phi_A - \Phi_B - \beta) - r \mu \sin \beta).
$$
It follows from (\ref{expr}) that the difference $\Phi_A - \Phi_B$ is constant. Moreover, $\Phi_A = \Phi_B$ implies that $r_A = r_B = 1$. This corresponds to the full synchronization in the system. On the other hand, for the chimera state (i.e. when $r_B < 1$) one has $\Phi_A \neq \Phi_B$.

Then it follows from (\ref{2-torus}) that phases $\Phi_B$ and $\psi_B$ evolve with nonequal constant frequencies.

Further, from $\Phi_A - \Phi_B = const$ we have that right hand sides in (\ref{2-torus}) are $2 \pi$-periodic functions. Hence, $\Phi_B(t)$ and $\psi_B(t)$ are periodic.

We conclude that the stable chimera generates a quasiperiodic trajectory on the torus ${\mathbb T}^2$. In particular, this trajectory can be periodic if the ratio between frequencies of $\Phi_B$ and $\psi_B$ is a rational number. For simplicity set $\omega = 0$, then we get
$$
{\dot \Phi_B}/{\dot \psi_B} = \frac{1}{2} + \frac{1}{2 r_B^2}.
$$
The number $0 < r_B^2 < 1$ is not necessarily rational. Therefore, the trajectory on $G$ that corresponds to the stable chimera is not necessarily periodic. However, this subtle nuance is essentially irrelevant in numerical simulations. In this context we mention the Peixoto's theorem \cite{Peixoto} stating that quasiperiodic trajectories on 2-tori are structurally unstable. In other words, these trajectories can be destroyed by arbitrarily small perturbations in the system. Furthermore, when the quasiperiodic trajectory is destroyed, it is replaced by a stable periodic orbit, which is structurally stable.

\section{Simulations: M\" obius group actions on a generic 3-dimensional invariant submanifold}\label{sec:6}
In this Section we examine what is happening when we give up Assumption \ref{assum:1}. In other words, we will investigate chimeras that arise from system (\ref{chimera}) with densities that evolve off the Poisson manifold.

Underline that in this case the model is {\bf not solvable}. If the initial distributions are not uniform, system (\ref{r_B}) is not valid. Still, Proposition \ref{prop:1} remains valid regardless the initial distributions. Therefore, actions of the M\" obius group are still present and their actions on zero are still parameters $\alpha_A(t)$ and $\alpha_B(t)$ defined by (\ref{Mobius}) and evolving by (\ref{WS^2}). However, alpha's are no longer complex order parameters. This leaves us without strictly analytic tools and the results of this Section are strictly numerical.
\begin{figure}[t]
\centering
  \begin{tabular}{@{}cc@{}}
    \includegraphics[width=.38\textwidth]{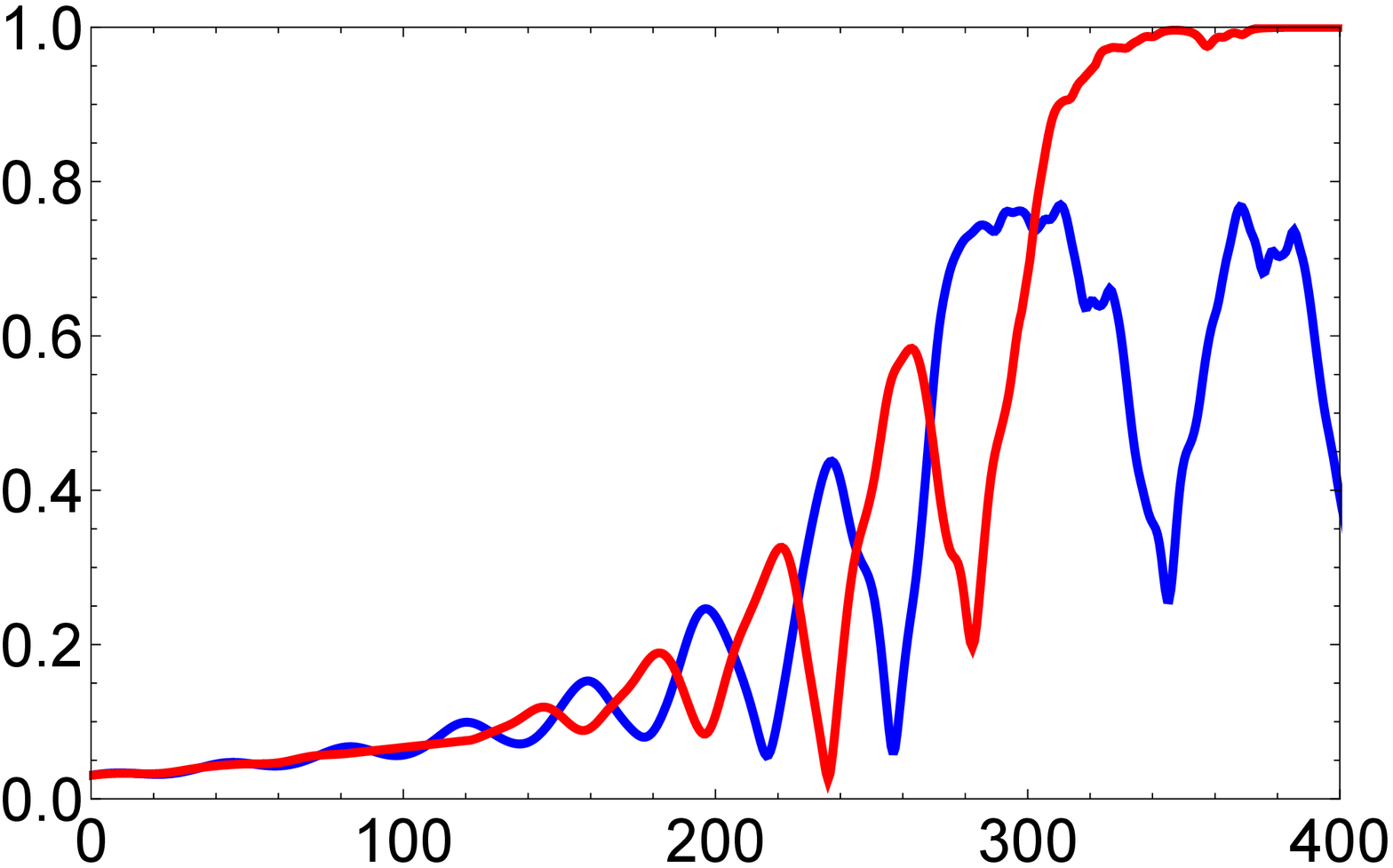}& \includegraphics[width=.383\textwidth]{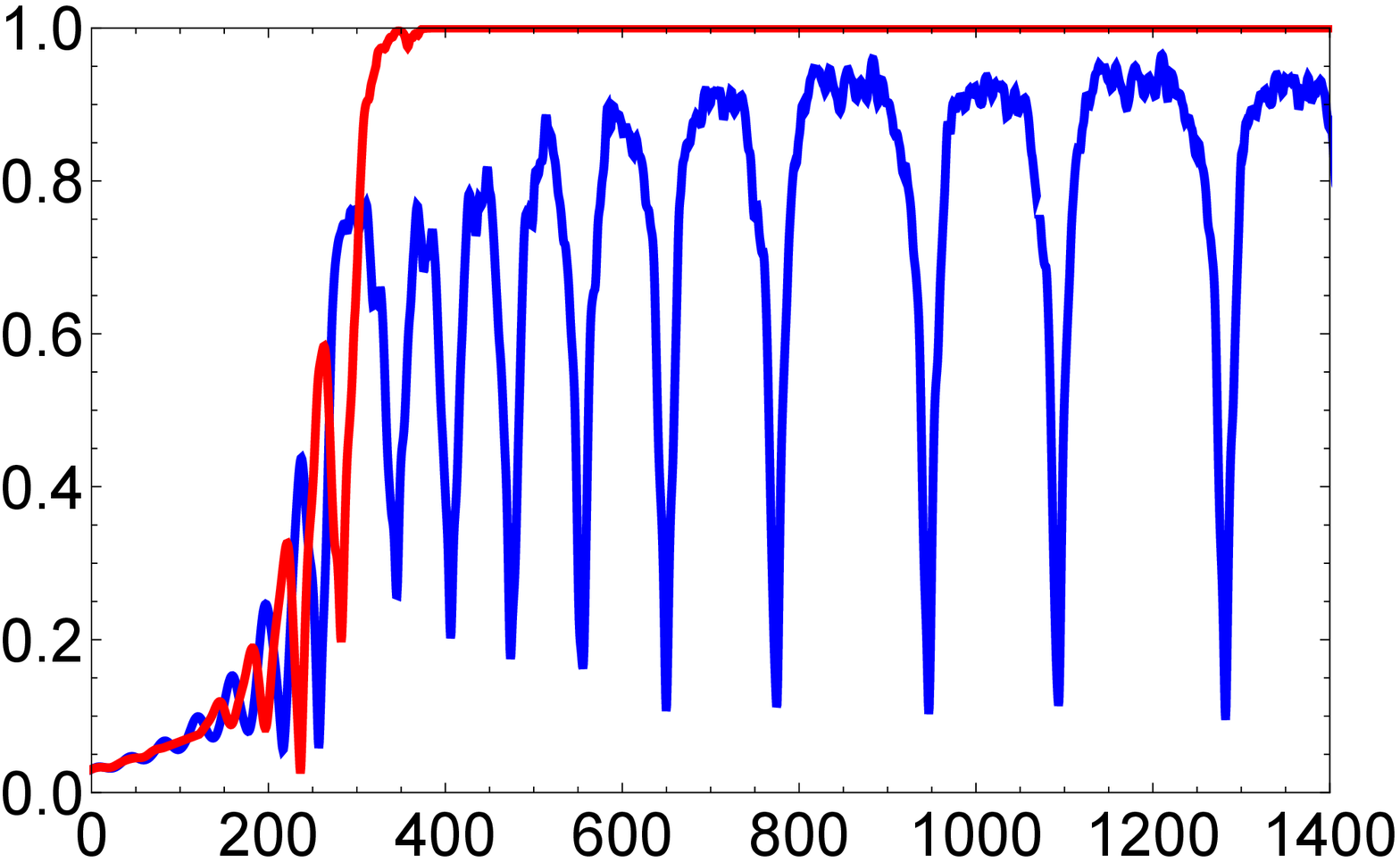}\\
\quad (a)&\quad (b)
  \end{tabular}
  \caption{\label{fig:7}Real order parameters $r_A(t)$ and $r_B(t)$ for sub-populations $A$ and $B$ in model (\ref{chimera}) with $\mu = 0.673, \nu = 0.327, \beta = \frac{\pi}{2} - 0.1$ (breathing chimera) and balanced non-uniform initial distributions on time intervals (a) $t \in [0,400]$ and (b) $t \in [0,1400]$.}
\end{figure}

For numerical simulations we need an example of a 3-dimensional manifold that is invariant for the evolution of Kuramoto oscillators. There are different examples of this kind. For instance, authors in \cite{EM2} used mixtures of Poisson kernels in order to study the dynamics off the Poisson manifold. Alternative examples of generic 3-dimensional invariant submanifolds can be borrowed from Directional Statistics: one can sample some commonly used probability distributions on $S^1$. (Emphasize, however, that there are only a few probability distributions on $S^1$ that are well studied and commonly used.) One possibility is to choose the von Mises distribution on $S^1$ as the starting point and consider submanifold in ${\cal P}(S^1)$ that consists of all probability measures that are obtained by the M\" obius group action on the von Mises distribution. This submanifold has been studied in detail in paper \cite{KJ} of Kato and Jones. We will refer to this invariant submanifold as {\it Kato-Jones family} (or Kato-Jones manifold). We refer to the paper \cite{KJ} for all details and exact formulas. Kato-Jones family has already been used in investigation of coupled oscillators in \cite{JC1}.

\begin{definition}
Probability measure on $S^1$ is called {\it balanced} if its mean value equals zero.
\end{definition}

We want to sample oscillators from a certain probability distribution that is balanced, but not uniform. To obtain such a distribution, we start by sampling random points $z_1,z_2,\dots,z_N$ on $S^1$ from the von Mises distribution $vM(\kappa)$ with fixed concentration parameter $\kappa = 2$. Then, there exists a unique (up to a rotation) M\" obius transformation that maps these points into a balanced set of points. (The previous assertion is the essence of so-called {\it Douady-Earle extension} \cite{DE}, more details in the next Section.) In order to compute this transformation, we solve the Kuramoto model with $N$ identical, globally coupled oscillators, with initial positions $z_1,z_2,\dots,z_N$ and {\bf repulsive} coupling. Then, real order parameter of the system will decrease and converge to zero (see \cite{Jac}). We run this Kuramoto model up to a certain moment $T$ when the order parameter equals zero. Due to \cite{MMS} the distribution obtained in such a way is a M\" obius transformation of the von Mises distribution $vM(2)$. In whole, we obtained a sample from a probability measure on $S^1$ that is: (a) balanced; (b) belongs to the Kato-Jones manifold. Underline that this measure is not uniform. In fact, it is a bimodal distribution with two peaks at antipodal points of $S^1$.
\begin{figure}[t]
\centering
  \begin{tabular}{@{}c@{}}
    \includegraphics[width=.33\textwidth]{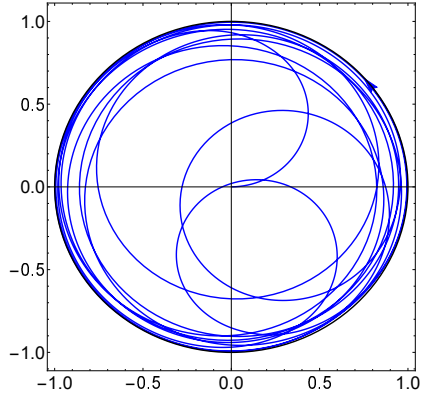}
  \end{tabular}
  \caption{\label{fig:8}M\" obius group action for desynchronized sub-population $B$ in breathing chimera: $\alpha_B(1000;t)$ at time interval $t \in [1000,1500]$. Initial distributions of oscillators are balanced, still, not uniform. Notice that $\alpha_B(1000;t)$ does not return to zero.}
\end{figure}
\begin{figure}[t]
\centering
  \begin{tabular}{@{}cc@{}}
    \includegraphics[width=.38\textwidth]{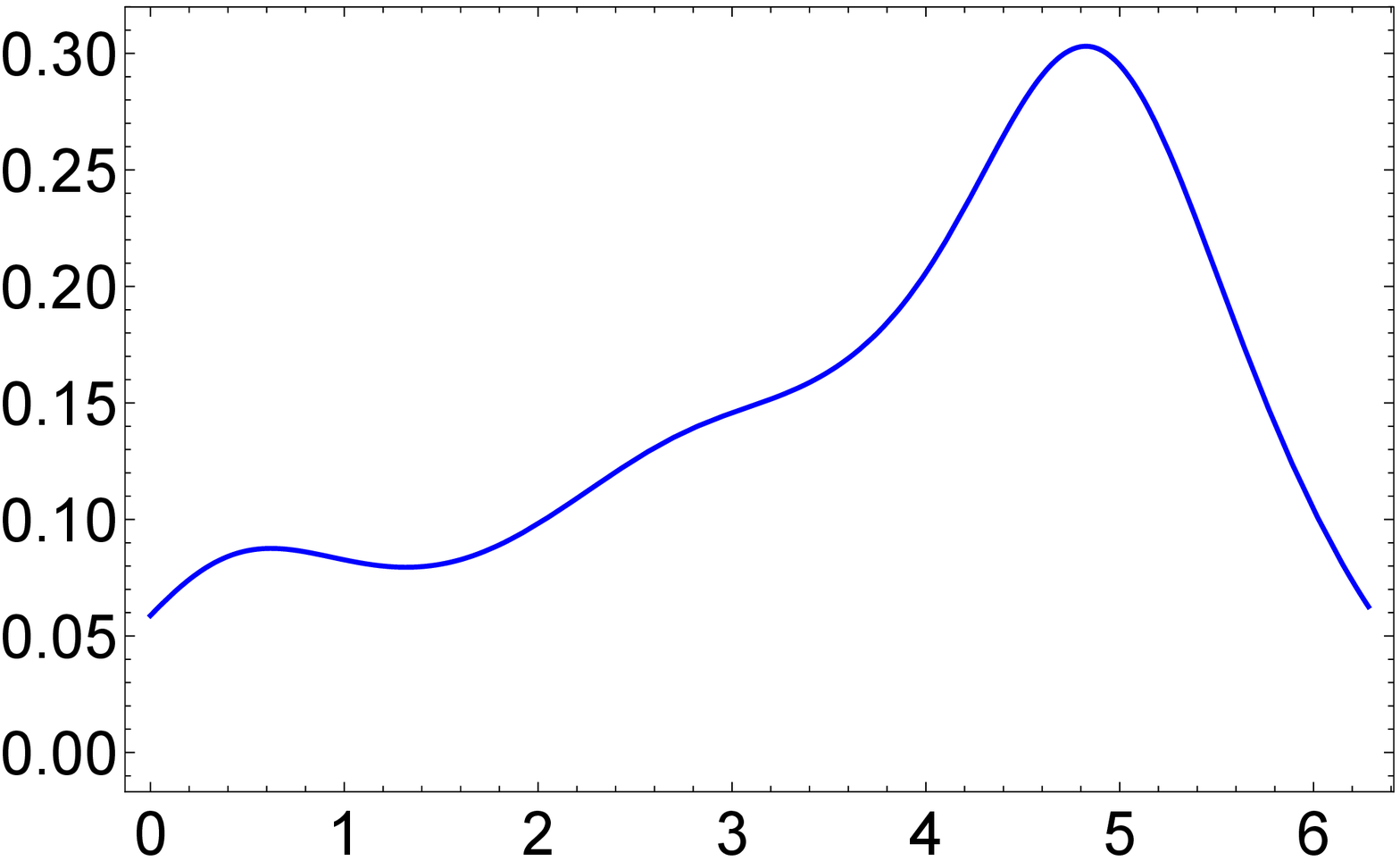}& \includegraphics[width=.38\textwidth]{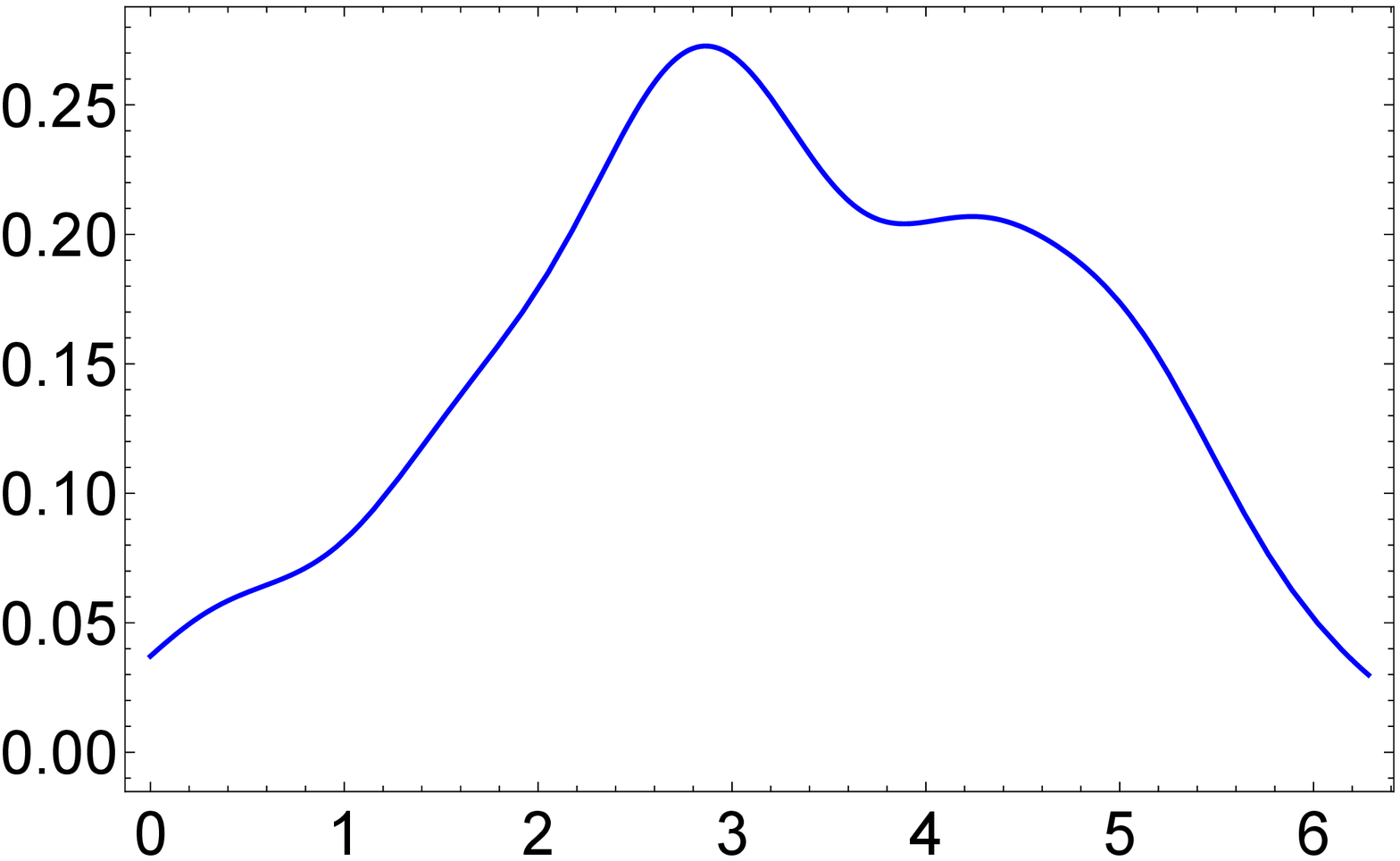}\\
\quad (a)&\quad (b)
  \end{tabular}
  \caption{\label{fig:9}Distributions of oscillators in desynchronized sub-population $B$ at moments $T_1$ and $T_2$, such that $r_B(T_1) = r_B(T_2)$: (a) $\rho_B(T_1)$ and (b) $\rho_B(T_2)$. Initial positions of oscillators are sampled from a balanced non-uniform distribution. Notice that distributions $\rho_B(T_1)$ and $\rho_B(T_2)$ {\bf are not} related by a simple rotation.}
\end{figure}

Further, we sample initial conditions for sub-populations $A$ and $B$ from this balanced Kato-Jones distribution and run system (\ref{chimera}). Simulation demonstrates that chimeras occur in such a setup, in a pretty much the same way as they occur on the Poisson manifold (although for slightly different parameter values), see Figure \ref{fig:7}. In Figure \ref{fig:8} we depict the projection of the corresponding trajectory from $G$ to $\mathbb{D}$. We can see that $\alpha_B(1000;t)$ does not turn to zero. This means that $g_{[1000,t]}^B$ is never a rotation. Underline that, there is nothing surprising in this conclusion in such a setup.

Figure \ref{fig:9} shows densities of the desynchronized sub-population at two moments $T_1$ and $T_2$, such that $r_B(T_1) = r_B(T_2)$. Both densities belong to the Kato-Jones manifold, however, they are obviously not related by a simple rotation. Of course, this invariant submanifold is larger than the Poisson manifold (its dimension is 3, not 2). Hence, it contains densities of various shapes with equal real (and complex) order parameters.

This makes a crucial difference with the evolution on the Poisson manifold: global phase $\psi_B$ affects the shape of density.

\section{Discussion: Difference between evolution on the Poisson manifold and on generic 3-dimensional invariant submanifolds}\label{sec:7}
Under Assumption \ref{assum:1}, the densities of oscillators' phases evolve on the Poisson manifold and exhibit very simple asymptotic behavior in the chimera state. However, the corresponding trajectory on $G \times G$ turns out to be much more complicated. In Section \ref{sec:5} we have investigated a projection map $\xi$ that maps the trajectory from 6-dimensional manifold $G \times G$ to 4-dimensional invariant manifold (product of two Poisson manifolds). This 4-dimensional manifold can be identified with $\mathbb{D} \times \mathbb{D}$. Map  $\xi : G \times G \to \mathbb{D} \times \mathbb{D}$ has fibers $S^1 \times S^1$. The impact of global phases $\psi_A \in S^1$ and $\psi_B \in S^1$ is hidden: they affect dynamics of individual oscillators, but do not affect the shape of densities.

In Section \ref{sec:6} we have studied evolution on generic 3-dimensional invariant submanifolds (off the Poisson manifold). In this case the projection map is one-to-one; it maps a trajectory from 6-dimensional manifold $G \times G$ to a trajectory on 6-dimensional invariant submanifold in ${\cal P}(S^1) \times {\cal P}(S^1)$. This projection does not have fibers. Impact of global phases is now clearly visible, they affect the shape of densities.

In order to shed some additional light on special properties of the Poisson manifold, we take a look from a slightly different point of view.

First of all, denote by $C(\mu)$ the mean value (centroid) of a measure $\mu$ on $S^1$. Of course, $C(\mu)$ is a point in the unit disc $\mathbb{D}$.

Second, in the proofs of propositions below we will also use the notion of {\it conformal barycenter} of a probability measure on $S^1$. We denote by $B(\mu)$ the conformal barycenter of a measure $\mu$. $B(\mu)$ is also a point in $\mathbb{D}$. For the definition of conformal barycenter we refer to the seminal paper \cite{DE} of Douady and Earle. \footnote{The notion of conformal barycenter appeared in recent studies of hyperbolic gradient flows generated by the Kuramoto model, see \cite{CEM,Jac}.}

We continue with some basic facts that are straightforward from the definition of conformal barycenter (see \cite{DE}).

\begin{proposition}
\label{prop:5}
Measure $\mu$ is balanced, if and only if $B(\mu) = C(\mu) = 0$.
\end{proposition}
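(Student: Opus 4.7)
The statement decomposes into two implications, and the real work is packed into one of them.

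The direction $B(\mu)=C(\mu)=0 \Rightarrow \mu$ balanced is immediate from the definition of ``balanced'': the hypothesis supplies $C(\mu)=0$, which is exactly what it means for $\mu$ to be balanced. So no argument is needed here.

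For the converse, the plan is to invoke the defining property of the Douady--Earle conformal barycenter. Recall that $B(\mu)$ is characterized as the unique point $w \in \mathbb{D}$ for which the Möbius transformation $\varphi_w \in G$ sending $w$ to $0$ (namely $\varphi_w(z) = (z-w)/(1-\bar w z)$) pushes $\mu$ to a measure with Euclidean centroid at the origin, i.e.\ the unique $w$ satisfying
\begin{equation*}
\int_{S^1} \varphi_w(\zeta)\, d\mu(\zeta) \;=\; 0.
\end{equation*}
Equivalently, $C((\varphi_w)_*\mu) = 0$. Now I would simply test this equation at the candidate point $w=0$: since $\varphi_0 = \mathrm{id}$, the integral reduces to $\int_{S^1}\zeta\,d\mu(\zeta) = C(\mu)$. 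By the assumption that $\mu$ is balanced, $C(\mu)=0$, so $w=0$ satisfies the defining equation of $B(\mu)$. Uniqueness then forces $B(\mu)=0$, and combined with $C(\mu)=0$ this yields the claim.

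The only delicate point, which is really not an obstacle given the hypotheses of the paper, is the uniqueness part of the Douady--Earle definition: it relies on the fact that the map $w \mapsto \int_{S^1} \varphi_w\, d\mu$ is a homeomorphism from $\mathbb{D}$ onto itself, provided $\mu$ has no atom of mass $\geq 1/2$. This is established in \cite{DE}; all measures considered here (Poisson kernels, Kato--Jones distributions, empirical measures of large populations) satisfy this condition, so $B(\mu)$ is unambiguously defined and the plug-in argument above closes the proof.
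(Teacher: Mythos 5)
Your proof is correct and follows essentially the same route as the paper, which simply records this proposition as a fact ``straightforward from the definition of conformal barycenter'' in \cite{DE}; you have merely unpacked that definition (the barycenter as the unique $w$ with $C((\varphi_w)_*\mu)=0$) and verified the claim by testing $w=0$. Both directions are handled cleanly and the caveat about uniqueness (no atoms of mass $\geq 1/2$) is correctly noted.
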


\begin{proposition}
\label{prop:6}
Let $\mu$ and $\pi$ be two measures on $S^1$ related by a M\" obius transformation, i.e. $\mu = g_* \pi$ for some $g \in G$. Then $B(\mu) = g(B(\pi))$.
\end{proposition}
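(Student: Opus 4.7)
The plan is to deduce Proposition \ref{prop:6} directly from the variational/normalizing characterization of the conformal barycenter given in \cite{DE}. Recall that $B(\mu) \in \mathbb{D}$ is the unique point $p$ with the property that if $T_p \in G$ is any Möbius transformation with $T_p(p) = 0$, then the push-forward $(T_p)_* \mu$ has centroid zero, i.e. $C((T_p)_* \mu) = 0$. (This characterization requires $\mu$ not to be concentrated on one or two atoms, a mild non-degeneracy which we assume throughout, and it is independent of the choice of $T_p$ since any two such transformations differ by a rotation which preserves the centroid being zero.)

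Given this characterization, the proof reduces to a one-line equivariance calculation. First I would set $p := B(\pi)$ and pick $h \in G$ with $h(p) = 0$ such that $C(h_* \pi) = 0$ (which exists by the definition of $p = B(\pi)$). Next I would consider the candidate point $q := g(p) \in \mathbb{D}$ for $B(\mu)$ and exhibit a Möbius transformation normalizing $\mu$ at $q$: namely take $\tilde h := h \circ g^{-1} \in G$. Then $\tilde h(q) = h(g^{-1}(g(p))) = h(p) = 0$, and the functoriality of push-forward gives
\begin{equation*}
\tilde h_* \mu \;=\; (h \circ g^{-1})_*\bigl(g_* \pi\bigr) \;=\; h_* \pi,
\end{equation*}
which is balanced by construction. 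Hence $\tilde h$ sends $q$ to $0$ and normalizes $\mu$ to a balanced measure, so by uniqueness in the characterizing property, $B(\mu) = q = g(B(\pi))$, as desired.

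The step that requires the most care is the appeal to uniqueness: one has to ensure that $\mu$ lies in the class of admissible measures on which the Douady--Earle barycenter is well defined, and that $B(\mu)$ is characterized purely by the balancing condition independently of which $\tilde h$ is used. Both points are standard in \cite{DE}: admissibility is preserved by Möbius push-forward (since $g$ is a homeomorphism of $S^1$ and preserves the two-atom degeneracy class), and any two normalizing transformations at a given point differ by a rotation, which fixes the centroid $0$. With these two remarks in place, the equivariance $B(g_* \pi) = g(B(\pi))$ is immediate, and in fact it reproves Proposition \ref{prop:5} as the special case $p = 0$.
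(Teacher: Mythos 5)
Your proof is correct and follows essentially the route the paper itself intends: the paper states Proposition~\ref{prop:6} without proof as a ``basic fact straightforward from the definition'' in \cite{DE}, and your argument is exactly that derivation, since the Douady--Earle barycenter $B(\mu)$ is by definition the unique zero of $w \mapsto C\bigl((T_w)_*\mu\bigr)$, so your normalizing characterization plus functoriality of push-forward gives the equivariance immediately. The only point worth tightening is the non-degeneracy hypothesis, which in \cite{DE} is precisely that $\mu$ has no atom of mass $\geq \tfrac{1}{2}$ (automatic here since the paper works with absolutely continuous measures).
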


Proposition \ref{prop:6} claims that if a measure is transformed by a M\" obius transformation, then its conformal barycenter is also transformed by the same M\" obius transformation. This is an essential property of conformal barycenter. Notice that, in general, the same is not true for mean values (centroids) of measures on the circle.

\begin{proposition}
\label{prop:7}
Let $\mu \in {\cal P}(S^1)$ be an absolutely continuous measure. Then, there exists a unique (up to a rotation) M\" obius transformation $g \in G$, such that the measure $g_* \mu$ is balanced.
\end{proposition}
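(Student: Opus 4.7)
The plan is to reduce Proposition \ref{prop:7} entirely to the three preceding propositions \ref{prop:4}, \ref{prop:5}, \ref{prop:6} together with two external ingredients: (i) for any absolutely continuous $\mu\in{\cal P}(S^1)$ the conformal barycenter $B(\mu)\in\mathbb{D}$ is well defined (Douady--Earle \cite{DE}), and (ii) $G$ acts transitively on $\mathbb{D}$ with the stabilizer of each point equal to the rotation subgroup. Once these are in hand, the whole argument becomes purely group-theoretic and does not touch any integrals.

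For existence I would first invoke \cite{DE} to produce the point $a:=B(\mu)\in\mathbb{D}$; absolute continuity of $\mu$ is more than enough to ensure $\mu$ has no atom of mass $\geq 1/2$, which is the hypothesis Douady--Earle actually need. Next I would exhibit an explicit $g\in G$ sending $a$ to $0$, for instance
$$
g(z)=\frac{z-a}{1-\bar a\, z}.
$$
Applying Proposition \ref{prop:6} to $\mu=\mathrm{id}_*\mu$ gives $B(g_*\mu)=g(B(\mu))=g(a)=0$, and then Proposition \ref{prop:5} converts $B(g_*\mu)=0$ into $C(g_*\mu)=0$, i.e. $g_*\mu$ is balanced.

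For uniqueness up to a rotation I would assume two transformations $g_1,g_2\in G$ both balance $\mu$. Proposition \ref{prop:5} applied to each $g_i{}_*\mu$ yields $B(g_i{}_*\mu)=0$, and Proposition \ref{prop:6} then forces $g_i(a)=0$ for $i=1,2$. Hence $g_2\circ g_1^{-1}$ fixes the origin, and Proposition \ref{prop:4} identifies it as a rotation; equivalently, $g_2=R\circ g_1$ for some rotation $R$, which is precisely the desired uniqueness up to a rotation.

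The only real obstacle is the existence and uniqueness of $B(\mu)\in\mathbb{D}$ itself: this is the nontrivial analytic content imported from \cite{DE}, essentially a degree or fixed-point argument for the Poisson-type integral equation defining the conformal barycenter. I would simply cite it rather than reproduce it, because once that point $a$ is available the rest of the argument is a two-line chase through Propositions \ref{prop:4}--\ref{prop:6} with no additional estimates.
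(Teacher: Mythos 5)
Your proof is correct and follows the same route the paper itself (implicitly) takes: the paper states Proposition \ref{prop:7} without a separate proof, treating it as an immediate consequence of the Douady--Earle conformal barycenter \cite{DE} combined with Propositions \ref{prop:5} and \ref{prop:6} and the fact that the stabilizer of the origin in $G$ is the rotation subgroup (Proposition \ref{prop:4}) --- which is exactly your argument. The only hair worth splitting is that your existence step uses the implication $B(\nu)=0 \Rightarrow C(\nu)=0$, which is the reverse of the direction Proposition \ref{prop:5} literally supplies; it is nonetheless an equally immediate consequence of the Douady--Earle definition (the barycentric vector field evaluated at the origin is the Euclidean centroid), so nothing essential is missing.
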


We now examine invariant submanifolds for actions of group $G$ in more detail.

\begin{corollary}
Invariant submanifold ${\cal M} \subset {\cal P}(S^1)$ for the action of $G$ is fully determined by the balanced distribution $\mu_0 \in {\cal M}$. In addition, each absolutely continuous measure $\mu \in {\cal P}(S^1)$ belongs to exactly one invariant submanifold in ${\cal P}(S^1)$.
\end{corollary}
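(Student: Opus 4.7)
The plan is to read ``invariant submanifold'' as a $G$-orbit in the space of absolutely continuous probability measures on $S^1$ and then assemble both assertions from Propositions~\ref{prop:5}--\ref{prop:7}. First I would show that every invariant submanifold ${\cal M}$ contains at least one balanced measure. Picking any absolutely continuous $\mu \in {\cal M}$, Proposition~\ref{prop:7} supplies $g \in G$ with $g_* \mu$ balanced, and $G$-invariance of ${\cal M}$ forces $\mu_0 := g_* \mu \in {\cal M}$; by Proposition~\ref{prop:5}, $C(\mu_0) = B(\mu_0) = 0$.

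Next I would verify that ${\cal M}$ is recovered from $\mu_0$ as the orbit $\{g_* \mu_0 : g \in G\}$, which immediately gives the ``fully determined'' statement. One inclusion is invariance of ${\cal M}$; for the reverse, take any $\nu \in {\cal M}$ and apply Proposition~\ref{prop:7} to $\nu$ to obtain $h \in G$ with $h_* \nu$ balanced. Then $h_* \nu$ and $\mu_0$ are both balanced elements of ${\cal M}$, hence (by the uniqueness up to rotation in Proposition~\ref{prop:7}, applied to any common preimage) they differ by a rotation $R \in G$, so $\nu = h^{-1}_* R_* \mu_0$ lies in the orbit of $\mu_0$. Consequently two invariant submanifolds sharing a balanced measure coincide.

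For the second assertion, given an absolutely continuous $\mu \in {\cal P}(S^1)$, Proposition~\ref{prop:7} produces $g \in G$ with $\mu_0 := g_* \mu$ balanced, and $\mu$ belongs to the orbit ${\cal M} := \{h_* \mu_0 : h \in G\}$. If $\mu$ also belonged to a second invariant submanifold ${\cal M}'$, then $G$-invariance places $\mu_0 \in {\cal M}'$, and the first part of the corollary forces ${\cal M}' = {\cal M}$. The main subtlety I expect is the rotational ambiguity in Proposition~\ref{prop:7}: one must confirm that two balancing transformations $g_1, g_2$ for the same $\mu$ differ by a rotation, so that the resulting balanced representatives lie in a single $G$-orbit and no ambiguity propagates. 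A secondary point worth flagging is that the absolute continuity hypothesis is essential; it guarantees the existence of the conformal barycenter and hence of the balancing transformation, without which an orbit might fail to contain a balanced representative and the argument would break down.
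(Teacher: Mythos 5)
Your proposal is correct and follows exactly the route the paper intends: the corollary is stated without an explicit proof precisely because it is the immediate consequence of Proposition~\ref{prop:7} (existence and uniqueness up to rotation of the balancing transformation) once an invariant submanifold is read as a $G$-orbit, which is also your reading. Your elaboration of the uniqueness-up-to-rotation step and of the role of absolute continuity fills in the details the paper leaves implicit, and introduces no gap.
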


\begin{proposition}
\label{prop:8}
Each invariant submanifold for the action of $G$ contains a unique (up to a rotation) balanced distribution and (as the limit case) the delta distribution.
\end{proposition}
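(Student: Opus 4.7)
The plan is to reduce everything to the $G$-equivariance of the conformal barycenter $B$ (Propositions \ref{prop:5}, \ref{prop:6}) together with Proposition \ref{prop:4}. For existence of a balanced representative, I would pick any absolutely continuous $\mu \in {\cal M}$, set $\beta := B(\mu) \in \mathbb{D}$, and choose $g \in G$ sending $\beta$ to the origin, e.g.\ $g(z) = (z-\beta)/(1-\bar\beta z)$. Proposition \ref{prop:6} then gives $B(g_*\mu) = g(\beta) = 0$, and Proposition \ref{prop:5} yields that $g_*\mu$ is balanced; invariance of ${\cal M}$ keeps it in the submanifold. For uniqueness, if $\mu_1, \mu_2 \in {\cal M}$ are both balanced and $\mu_2 = h_*\mu_1$ with $h \in G$, the same equivariance identity forces $h(0) = h(B(\mu_1)) = B(\mu_2) = 0$, so Proposition \ref{prop:4} identifies $h$ with a rotation.

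For the delta part, the plan is to exhibit $\delta_{z_0}$ as a weak limit along the orbit. Fix $z_0 \in S^1$ and consider the family $t_\alpha(z) = (z+\alpha)/(1+\bar\alpha z)$ with $\alpha = r z_0$ and $r \to 1^-$. The elementary identity $(z+z_0)/(1+\bar z_0 z) = z_0$ for $z \in S^1 \setminus \{-z_0\}$ shows $t_\alpha(z) \to z_0$ pointwise on $S^1 \setminus \{-z_0\}$, while $-z_0$ stays fixed. For any $\mu \in {\cal M}$ with no atom at $-z_0$ (automatic in the absolutely continuous case), dominated convergence gives $\int f\, d(t_{\alpha *}\mu) = \int (f \circ t_\alpha)\, d\mu \to f(z_0)$ for every $f \in C(S^1)$, i.e.\ $t_{\alpha *}\mu \to \delta_{z_0}$ weakly. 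Since rotations act transitively on $\{\delta_z : z \in S^1\}$, the limiting delta is determined up to a rotation.

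The only delicate point I anticipate is the admissibility condition implicit in Propositions \ref{prop:5}--\ref{prop:7}: the Douady--Earle barycenter $B$ is classically defined for non-atomic measures, so one needs to know that every $\mu \in {\cal M}$ is of this type, or at least that $B$ extends continuously to whatever elements of ${\cal M}$ are of interest. Modulo this regularity the argument is entirely formal, and the ``limit case'' qualifier in the statement is naturally read as saying that $\delta_{z_0}$ lies in the weak closure of the orbit rather than in the orbit proper.
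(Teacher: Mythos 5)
Your argument is correct, and for the delta-distribution part it takes a genuinely different (and more self-contained) route than the paper. For the balanced representative you argue exactly as the paper does, via the equivariance of the conformal barycenter (Proposition~\ref{prop:6}) and the characterization of balanced measures (Proposition~\ref{prop:5}); your explicit uniqueness step, forcing $h(0)=0$ and invoking Proposition~\ref{prop:4}, is a clean way of spelling out what the paper delegates to Proposition~\ref{prop:7}. For the delta distribution the paper instead pushes the balanced measure $\mu_0$ by maps $g_j$ with $g_j(0)=z_j$, $|z_j|\to 1$, notes that the $z_j$ are the conformal barycenters of the images, and then uses the series identity (\ref{centrbary}) from \cite{MMS} relating barycenter and centroid to conclude $|C(\mu_j)|\to 1$ and hence weak convergence to a delta. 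Your direct computation --- $t_\alpha(z)\to z_0$ pointwise on $S^1\setminus\{-z_0\}$ plus dominated convergence --- proves the weak convergence $t_{\alpha\,*}\mu\to\delta_{z_0}$ without appealing to (\ref{centrbary}) at all, and it also pins down the limit point $z_0$ explicitly, which the paper's argument (tracking only the modulus of the centroid) leaves slightly implicit. What the paper's route buys is consistency with the barycenter-versus-centroid theme that drives Section~\ref{sec:7}; what yours buys is elementarity and independence from the Fourier expansion of $\mu_0$. Your closing caveat about the domain of definition of the Douady--Earle barycenter is apt and matches the paper's implicit restriction to absolutely continuous measures.
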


\begin{proof}
The existence of a balanced distribution follows from Proposition \ref{prop:7}. In order to prove that each invariant submanifold also contains (as the limit case) the delta distribution, one might use results of \cite{EM1} about attracting sets for the Kuramoto model. We will take a different way. Let ${\cal M} \subset {\cal P}(S^1)$ be an invariant submanifold for $G$. Due to Proposition \ref{prop:7}, there is a balanced measure $\mu_0 \in {\cal M}$. Next, pick an arbitrary sequence of points $z_1, z_2, \dots$ in the unit disc, such that $|z_j| \to 1$ as $j \to \infty$. In other words, sequence $z_j$ converges towards the boundary (infinite horizon) of the unit disc. Obviously, there exists a sequence $g_1, g_2, \dots$ of M\" obius transformations that map zero to these points: $z_1 = g_1(0), z_2 = g_2(0), \dots$  Due to Proposition \ref{prop:6}, $z_j$ are conformal barycenters of measures $\mu_j = {g_j}_* \, \mu_0$.

We now have a sequence of measures $\mu_1, \mu_2, \dots$ with conformal barycenters $z_1 = B(\mu_1), z_2 = B(\mu_2), \dots$, such that $|B(\mu_j)| \to 1$. We need to show that their mean values $C(\mu_j)$ satisfy the same. This follows from the relation between conformal barycenter and mean value that has been derived in \cite{MMS}
\begin{equation}
\label{centrbary}
B(\mu_j) = C(\mu_j) + (1 - |B(\mu_j)|^2) \sum \limits_{n=1}^\infty (-1)^n \bar c_n e^{i n \psi} \bar B(\mu_j)^{n-1}.
\end{equation}
Here, $c_n$ are Fourier coefficients of the balanced measure $\mu_0$:
$$
d \mu_0(\theta) = \frac{1}{2 \pi} \sum \limits_{n=-\infty}^\infty c_n e^{i n \theta} d \theta, \quad c_{-n} = c_n, \; c_0 =1.
$$
From (\ref{centrbary}) it is easy to see that if $|B(\mu_j)| \to 1$, then $|C(\mu_j)| \to 1$.

Hence, moduli of mean values (centroids) of measures $\mu_j = {g_j}_* \, \mu_0$ converge towards 1. This implies that $\mu_j$ converge (weakly) towards the delta distribution.
\end{proof}

\begin{remark}
Proposition \ref{prop:8} asserts that each invariant submanifold contains a unique (up to a rotation) balanced measure (in the context of couple oscillators, this is usually called {\it a fully incoherent state}) and the delta measure (that corresponds to the coherent state in ensembles of coupled oscillators). The balanced measure in the Poisson manifold is uniform (unique and invariant w.r.t. planar rotations). Balanced measures in other invariant manifolds are multimodal.
\end{remark}

\begin{proposition}
\label{prop:9}
Let $\mu \in {\cal P}(S^1)$ be an absolutely continuous measure that is {\bf not} balanced. Suppose that $\mu = g_* \mu$ for some $g \in G$ that is {\bf not} the identity transformation. Then, $\mu$ is the Poisson kernel.
\end{proposition}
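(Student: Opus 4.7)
The natural approach is to exploit the equivariance of the Douady--Earle barycenter $B$ under $G$. Applying Proposition~\ref{prop:6} to the invariance $\mu=g_{*}\mu$ immediately yields $B(\mu)=g(B(\mu))$, so $B(\mu)\in\overline{\mathbb{D}}$ must be a fixed point of $g$. Because $\mu$ is absolutely continuous, $B(\mu)$ lies in the open disc $\mathbb{D}$, and because $\mu$ is not balanced, Proposition~\ref{prop:5} forces $B(\mu)\neq 0$. A non-identity element of $G$ that fixes an interior point of $\mathbb{D}$ is necessarily elliptic, which pins down the conjugacy class of $g$.

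Next I would move this fixed point to the origin: choose any $h\in G$ with $h(B(\mu))=0$ and set $\tilde g:=h g h^{-1}$, $\nu:=h_{*}\mu$. Then $\tilde g$ fixes $0$, so by Proposition~\ref{prop:4} it is a genuine rotation $R_{\theta}$ for some $\theta\in(0,2\pi)$. Equivariance of $B$ gives $B(\nu)=h(B(\mu))=0$, hence $\nu$ is balanced, and the invariance transports to $\nu=(R_{\theta})_{*}\nu$.

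The heart of the argument is then to deduce that $\nu$ must be the uniform measure $\frac{d\varphi}{2\pi}$: once this holds, $\mu=(h^{-1})_{*}\nu$ is the push-forward of the uniform measure by a single element of $G$, which by the remark after Proposition~\ref{prop:2} is precisely a Poisson kernel. In Fourier coordinates the relation $\nu=(R_{\theta})_{*}\nu$ reads $c_{n}(1-e^{-in\theta})=0$ for every $n$, so as soon as $\theta/(2\pi)$ is irrational every non-zero mode vanishes and $\nu$ is uniform.

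The main obstacle is the rational case $\theta\in 2\pi\mathbb{Q}$, where a balanced absolutely continuous measure can a priori be invariant under a finite cyclic rotation without being uniform (e.g.\ $\frac{1}{2\pi}(1+\varepsilon\cos n\varphi)$ is $R_{2\pi/n}$-invariant yet non-uniform). Closing the gap requires ruling out this branch, presumably by upgrading the stabilizer of $\mu$ in $G$ from the cyclic group $\langle g\rangle$ to a whole one-parameter elliptic subgroup fixing $B(\mu)$. The natural route is to note that $\mathrm{Stab}_{G}(\mu)$ is a closed subgroup of the compact torus $\mathrm{Stab}_{G}(B(\mu))\cong\mathrm{SO}(2)$, and then invoke extra information---for instance that $g$ has infinite order so that $\overline{\langle g\rangle}=\mathrm{SO}(2)$---which reduces the rational case to the irrational one already handled and completes the identification of $\mu$ as a Poisson kernel.
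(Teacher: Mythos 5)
Your reduction --- conjugate $g$ to a rotation about the origin and transport $\mu$ to a balanced, rotation-invariant measure $\nu$ --- is clean, and the gap you flag at the end is genuine; in fact it is fatal to the statement as literally written, not merely to your proof. Take your own example $\nu=\frac{1}{2\pi}(1+\varepsilon\cos n\varphi)\,d\varphi$ with $n\geq 2$ and small $\varepsilon>0$: it is absolutely continuous, balanced, invariant under $R_{2\pi/n}$, and not uniform. Push it forward by any $h\in G$ with $h(0)\neq 0$ and set $\mu:=h_*\nu$, $g:=h\circ R_{2\pi/n}\circ h^{-1}$. Then $\mu=g_*\mu$ with $g\neq\mathrm{id}$, $\mu$ is absolutely continuous, $B(\mu)=h(0)\neq 0$ by Proposition~\ref{prop:6}, so $\mu$ is not balanced by Proposition~\ref{prop:5}; yet $\mu$ lies in the $G$-orbit of $\nu$ rather than that of the uniform measure, so it is not a Poisson kernel. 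Consequently the rational branch cannot be closed from the stated hypotheses: the rescue you suggest (``$g$ has infinite order, so $\overline{\langle g\rangle}=SO(2)$'') assumes exactly what the proposition does not supply, since a torsion elliptic element is admissible. Your argument does prove the correct variant of the statement --- invariance under a one-parameter family $g_t$ (which is what actually occurs in the dynamical application), or under a single elliptic $g$ of infinite order, forces $\mu$ to be a Poisson kernel.

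For comparison, the paper's proof takes a different and shorter route: assuming $\mu$ is neither balanced nor a Poisson kernel, it asserts that the conformal barycenter $B(\mu)$ and the centroid $C(\mu)$ are two distinct interior points both fixed by $g$, which would force $g=\mathrm{id}$. But the step $g(C(\mu))=C(\mu)$ does not follow from $\mu=g_*\mu$: invariance only gives the tautology $C(g_*\mu)=C(\mu)$, and the centroid, unlike the conformal barycenter, is not equivariant under M\"obius maps --- a point the paper itself emphasizes immediately after Proposition~\ref{prop:6}. In the counterexample above $g$ fixes only $B(\mu)$ and moves the point $C(\mu)\neq B(\mu)$, so the paper's argument breaks at precisely that step. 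Your analysis therefore does more than expose a gap in your own write-up: it identifies the additional hypothesis (a continuous family of symmetries, or infinite order) that the proposition needs in order to be true.
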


\begin{proof}
We have that $\mu = g_* \mu$. Since $\mu$ is not balanced, and not a Poisson kernel, its conformal barycenter and mean value are two distinct points in $\mathbb{D}$, see \cite{DE}. Hence, $g$ maps $\mathbb{D}$ into itself and fixes two distinct points in $\mathbb{D}$: $g(\mathbb{D}) = \mathbb{D}, \, g(B(\mu)) = B(\mu), \, g(C(\mu)) = C(\mu)$. Identity is the only transformation that meets all these conditions.
\end{proof}

\begin{remark}
Equation (\ref{centrbary}) shows that conformal barycenter and mean value are two distinct points, except in the following two cases: (a) $c_n = 0$ (in this case measure $\mu_j$ is the Poisson kernel, since the corresponding balanced measure $\mu_0$ is uniform); and (b) $B(\mu_j) = 0$ (in this case $\mu_j$ is balanced, since $B(\mu_j)$ is its conformal barycenter). In all other cases the series on the right hand side of (\ref{centrbary}) are distinct from zero.
\end{remark}

The above Proposition emphasizes one exceptional feature of Poisson kernels: their mean values coincide with conformal barycenters. The parameter $\alpha_l = r_l e^{i \Phi_l}$ is both conformal barycenter and the mean value of density (\ref{Poisson}). Hence, any disc-preserving M\" obius transformation that fixes $\alpha_l$ maps the Poisson kernel into itself. Obviously, identity is not the only transformation in $G$ that fixes $\alpha_l$.

\section{Conclusion}\label{sec:8}
This study is based on the observation that the solvable chimera model (\ref{chimera}) induces two coupled actions of the M\" obius group on the unit disc. The corresponding dynamics on the group $G \times G$ is given by two coupled Watanabe-Strogatz systems (\ref{WS^2}). This observation has been first exploited in \cite{PR}.

Abrams et al. have shown in \cite{AMSW} that (\ref{chimera}) exhibits the stable chimera state for certain values of parameters. This state is sometimes also named a stationary chimera \cite{BSOP}. Recent findings of Engelbrecht and Mirollo (\cite{EM2}) demonstrate that stability of this chimera state is deceptive. Here, we have shown that the "stationarity" of this chimera state is also deceptive.

Our study unveiled an unexpected effect in the solvable chimera model: the density of oscillators is stationary (in the rotating coordinate frame), while individual oscillators are not stationary.

Proposition \ref{prop:9} demonstrates that this "deceptive stationary state" is possible only on the 2-dimensional Poisson manifold. This sheds an additional light on paper \cite{EM2} which demonstrated that dynamics near the Poisson manifold are more complicated than one might expect. In particular, authors in \cite{EM2} provide an explicit example of how a stable stationary state on the Poisson manifold bifurcates into limit cycles under a small perturbation off the Poisson manifold.

We have further pointed out that global phases $\psi_A$ and $\psi_B$ act as "hidden variables" whose impact is not visible on the shape of densities. In fact, chimera corresponds to a quasiperiodic trajectory on group $G \times G$ and the full picture of the dynamics can be restored only when observing this trajectory.

On the other side, when the evolution takes a place on generic 3-dimensional invariant submanifolds, global phases clearly affect the shape of densities.

We conclude this paper with a brief remark about some analogous concepts from Mathematical Physics.

\subsection{Broader significance}

There is a strong analogy between the Poisson manifold in our model and $SU(1,1)$-coherent states in quantum theories, notably in Quantum Optics. \footnote{In order to avoid confusion, underline that the term "coherent state" in quantum theories has nothing in common with terms "incoherent state" and "fully coherent state" in context of coupled oscillators.} Indeed, Poisson kernels (seen as analytic functions on the unit disc) fit into general Perelomov's framework of coherent states  \cite{Gazeau,Perelomov}.

In our case, there is group $G$ acting on the space ${\cal P}(S^1)$. The maximal compact subgroup of $G$ is the circle $S^1$ which can be identified with group $SO(2)$ of planar rotations. Furthermore, factorizing $G$ w.r.t. $S^1$ yields decomposition $G \setminus S^1 = \mathbb{D}$, where $\mathbb{D}$ is hyperbolic disc. The state in ${\cal P}(S^1)$ which is invariant w.r.t. the action of this maximal compact subgroup (group of planar rotations) is obviously the uniform measure on $S^1$. Hence, uniform measure corresponds to the so-called "ground state" in quantum theories. Then the manifold of coherent states is obtained by actions of $G$ on this ground state. In our case, this is the Poisson manifold, as stated in Proposition \ref{prop:2}. Moreover, the special feature of coherent states is that each of them is uniquely labeled by a point in $\mathbb{D}$. Thus, evolution of the quantum system on the invariant submanifold of coherent states can be restricted to $\mathbb{D}$.

In fact, Poisson kernels have already appeared as $SU(1,1)$-coherent states in a number of papers \cite{DSV,Kisil,Vourdas1,Wunsche}. In such a setup, quantum states are represented by analytic functions on the unit disc \cite{Vourdas2}. Then, square roots of Poisson kernels appear as coherent states (eigenfunctions of the hyperbolic Laplace-Beltrami operator). These are complex-valued analytic functions $f(z)$ in $\mathbb{D}$. Then, there exists a limit function $f(z) \to g(\varphi)$ as $|z| \to 1$ that is defined on $S^1$. The square of the modulus of this limit function $\rho(\varphi) = |g(\varphi)|^2$ is precisely the Poisson kernel which is interpreted as a phase distribution of a coherent quantum state. Therefore, the Poisson manifold is the manifold of phase distributions of $SU(1,1)$-coherent states.

The present paper demonstrates how fibers affect the dynamics on the manifold of coherent states. Although $SU(1,1)$-coherent states are uniquely labeled by points in $\mathbb{D}$, the trajectory on $G$ turns out to be much subtler than its projection on $\mathbb{D}$.

Recent findings of \cite{EM2} suggest that the manifold of $SU(1,1)$-coherent states is not, in general, attractive for quantum evolution on the full state space.

This indicates that something similar might also happen on other kinds of coherent states, for instance $SU(2)$-coherent states, or coherent states for the Weyl-Heisenberg group.

To conclude, chimera model (\ref{chimera}) induces intriguing dynamics on $G \times G$ and the group-theoretic approach facilitates understanding collective motions in the desynchronized sub-population.
This approach unveils some hidden subtleties.






\end{document}